\newtheorem{theorem}{Theorem}
\newtheorem{claim}[theorem]{Claim}
\newtheorem{lemma}{Lemma}
\newtheorem{assumption}{Assumption}
\newcommand{\norm}[1]{\left\|#1\right\|}
\newcommand{\pnorm}[2]{\bigl\|#2\bigr\|_#1}
\newcommand{\rank}{\mathrm{rank}}
\newcommand{\polylog}{\mathrm{polylog}}
\newcommand{\poly}{\mathrm{poly}}
\newcommand{\Diag}{\mathrm{diag}}
\newcommand{\sspan}{\mathrm{span}}
\newcommand{\bbr}{\mathbb{R}}
\newcommand{\bbc}{\mathbb{C}}
\newcommand{\E}{\mathbb{E}}
\newcommand{\Var}{\mathrm{Var}}
\begin{document}
\title{Quantum-inspired sublinear classical algorithms for solving low-rank linear systems}
\author{Nai-Hui Chia\thanks{University of Texas at Austin, Department of Computer Science.} \and  Han-Hsuan Lin$^{*}$ \and Chunhao Wang$^{*}$}
\date{\empty}
\maketitle

\begin{abstract}
  We present classical sublinear-time algorithms for solving low-rank linear systems of equations. Our algorithms are inspired by the HHL quantum algorithm~\cite{HHL09} for solving linear systems and the recent breakthrough by Tang~\cite{T18} of dequantizing the quantum algorithm for recommendation systems. Let $A \in \mathbb{C}^{m \times n}$ be a rank-$k$ matrix, and $b \in \mathbb{C}^m$ be a vector. We present two algorithms: a ``sampling'' algorithm that provides a sample from $A^{-1}b$ and a ``query'' algorithm that outputs an estimate of an entry of $A^{-1}b$, where $A^{-1}$ denotes the Moore-Penrose pseudo-inverse. Both of our algorithms have query and time complexity $O(\mathrm{poly}(k, \kappa, \|A\|_F, 1/\epsilon)\,\mathrm{polylog}(m, n))$, where $\kappa$ is the condition number of $A$ and $\epsilon$ is the precision parameter. Note that the algorithms we consider are sublinear time, so they cannot write and read the whole matrix or vectors. In this paper, we assume that $A$ and $b$ come with well-known low-overhead data structures such that entries of $A$ and $b$ can be sampled according to some natural probability distributions. Alternatively, when $A$ is positive semidefinite, our algorithms can be adapted so that the sampling assumption on $b$ is not required.
\end{abstract}

\section{Introduction}

The problem of solving linear systems of equations plays a fundamental role in many fields. In this problem, we are given a matrix $A \in \bbc^{m \times n}$ and a vector $b \in \bbc^m$. The objective is to find a vector $x \in \bbc^n$ such that $Ax = b$. This problem reduces to finding the inverse (or pseudo-inverse) of $A$, which can be solved by singular value decomposition on $A$ using $O(\min\{mn^2, nm^2\})$ operations. In recent decades, many algorithms have been proposed to reduce the time complexity for different special classes of matrices. For general square matrices, the best known algorithm is based on matrix multiplication with running time $O(n^{\omega})$ for $\omega < 2.373$ (see \cite{LeGall14}). When $A$ is $d$-sparse (i.e., there are at most $d$ nonzero entries in each row/column), Spielman \cite{Spielman10} gave an algorithm with running time $O(\max\{dm, dn\})$. More specially, when $A \in \bbc^{n \times n}$ is $d$-sparse and symmetric diagonally dominant, Spielman and Teng \cite{ST04} gave the first near-linear time algorithm with running time $O(d\log^{O(1)}\log(1/\epsilon)\, n)$, where $\epsilon$ is the precision parameter. Cohen \textit{et al.}~\cite{CKM+14} improved the running time to $O(d\sqrt{\log n}(\log\log n)^{O(1)}\log(1/\epsilon))$. Recently, a notable breakthrough for symmetric diagonally dominated matrices has been achieved by Andoni, Krauthgamer, and Pogrow~\cite{AKP18}, who demonstrated that computing one entry of $A^{-1}b$ can be done in sublinear time. 

In terms of quantum algorithms, in 2009, Harrow, Hassidim, and Lloyd~\cite{HHL09} gave a quantum algorithm (referred to as the HHL algorithm) that solves linear systems of equations in time $O(\polylog(n))$ for $n$-by-$n$ sparse matrices. It is also shown in \cite{HHL09} that inverting sparse matrices is \textbf{BQP}-complete. As a result, it is very likely that the HHL algorithm does hold an exponential speedup against classical ones. Following \cite{HHL09}, a series of quantum algorithms with better dependences on parameters has been proposed \cite{Ambainis12,CKS17}, and a survey of these quantum algorithms and their applications can be found in \cite{DHM+18}. 

Note that when solving linear system of equations, just writing down the answer $x$ requires time $n$, so obviously it is impossible to output $x$ in logarithmic time. Instead, the HHL algorithm outputs a quantum state (a normalized complex vector) whose amplitudes (the entries of this vector) are proportional to the corresponding entries of the answer $x$. Thus the classical information one can obtain is \emph{samples} from $\{1, \ldots, n\}$ distributed according to probability $|x(i)|^2/\norm{x}^2$ and the value $x^\dag M x$ for Hermitian matrices $M$ that can be efficiently implemented as a quantum operator.

The HHL algorithm inspired a series of logarithmic time quantum algorithms for various machine learning related problems, including but not limited to finding least square approximation~\cite{WBL12}, principal component analysis~\cite{LMR14}, support vector machine~\cite{RML14}, and recommendation systems~\cite{KP16}. Like the HHL algorithm, these logarithmic time algorithms do not give a full description of the answer, but give a quantum state from which one can sample. Noticing that a logarithmic time \emph{classical} algorithm might also be able to sample from the answer, Tang~\cite{T18} in their recent breakthrough work presented a logarithmic time {classical} algorithm for recommendation systems by employing the techniques of efficient low-rank approximation by Frieze, Kannan, and Vempala~\cite{FKV04}, showing that~\cite{KP16} does not give an exponential quantum speedup. Recently, Tang pointed out that their techniques also worked for other machine learning problems such as principal component analysis and supervised clustering~\cite{T18-2}.

Our work is inspired by Frieze \textit{et al.}~\cite{FKV04} and Tang~\cite{T18}. We solve the linear system of equations $Ax=b$ for a \emph{low-rank} matrix $A$ and a vector $b$ that have some natural sampling assumptions. Like the HHL algorithm, we do not output the whole vector $x$. Instead, we give logarithmic-time algorithms to sample from $x$ with probability $|x(i)|^2/\norm{x}^2$ and to estimate an entry $x(i)$. It is also possible to efficiently estimate $x^{\dag}Mx$ using our approach. More specifically, our algorithms depend on the ability to sample a row index of $A$ according to the norms of row vectors, the ability to sample an entry in each row according to the absolute value of each entry (see Assumption~\ref{asmp:sample-m}), and the ability to sample an entry from $b$ according to the absolute values of its entries (see Assumption~\ref{asmp:sample-v}). We discuss the details of these sampling assumptions in Section~\ref{sec:assump}.

Roughly speaking, our algorithms are based on the sampling techniques in~\cite{FKV04} (to get a small submatrix) and~\cite{T18} (to sample from a vector formed by a matrix-vector multiplication). To be able to manipulate (e.g., invert) the singular values, we introduce a \emph{new succinct description} of the resulting approximation matrix based on the small submatrix, which might be of independent interest and could potentially lead to other applications. In addition, we also propose a method to efficiently estimate $x^{\dag}Ay$ given the sampling access to $x$ and $y$, which extends the sampling tools provided by Tang~\cite{T18}.

\subsection{Notations, problem definition, and main results}
In this paper, we use $\norm{M}$ to denote the spectral norm and use $\|M\|_F$ to denote the Frobenius norm of the matrix $M$. We use $M(i, \cdot)$ to denote the $i$-th row of $M$, which is a row vector, and use $M(\cdot, j)$ to denote the $j$-th column of $M$, which is a column vector. The $(i,j)$-entry of $M$ is denoted by $M(i,j)$. For a vector $v$, we use $v(i)$ to denote the $i$-th entry of $v$. The complex-conjugate transpose of a matrix $M$ (and a vector $v$, respectively) is denoted by $M^{\dag}$ (and $v^{\dag}$, respectively). For a nonzero vector $x\in \mathbb{C}^n$, we denote by $\mathcal{D}_x$ the probability distribution on $\{1, \ldots, n\}$ where the probability that $i$ is chosen is defined as $\mathcal{D}_x(i) = |x(i)|^2/\norm{x}^2$ for all $i \in \{1, \ldots, n\}$. A sample from $\mathcal{D}_x$ is often referred to as a sample from $x$. Given two vectors $x, y \in \bbc^n$, the total variation distance between $\mathcal{D}_x$ and $\mathcal{D}_y$, denoted by $\|\mathcal{D}_x, \mathcal{D}_y\|_{TV}$, is defined as $\|\mathcal{D}_x,\mathcal{D}_y\|_{TV} := \frac{1}{2}\sum_{i=1}^n |\mathcal{D}_x(i) - \mathcal{D}_y(i)|.$

Let $A \in \bbc^{m \times n}$ be a matrix with $\rank(A) = k$. The usual definition of the condition number (as $\norm{A}/\|A^{-1}\|$) is not well-defined for a singular matrix. Here, we slightly change the definition of the \emph{condition number} of a singular matrix $A$, denoted by $\kappa$ as $\kappa = \norm{A}/\sigma_{\min}(A)$, where $\sigma_{\min}(A)$ is the minimum \emph{nonzero} singular value of $A$. We use $A^{-1}$ to denote the \emph{Moore-Penrose pseudo-inverse} of $A$, i.e., if $A$ has the singular value decomposition $A = \sum_{i=1}^k\sigma_iu_iv_i^{\dag}$, then $A^{-1} := \sum_{i=1}^k \sigma_i^{-1}v_iu_i^{\dag}$. In the problem of solving linear systems of equations, the objective is to query to and sample from $A^{-1}b$ for a given vector $b \in \bbc^m$. We summarize our main results in the following theorems. The first theorem asserts the ability to query an entry to the answer $A^{-1}b$.

\begin{restatable}{theorem}{mainquery}
  \label{thm:main1}
  Let $A \in \bbc^{m \times n}$ be a matrix that has the sampling access as in Assumption~\ref{asmp:sample-m}, and $b \in \bbc^m$ be a vector with the sampling access as in Assumption~\ref{asmp:sample-v}. Let $\kappa$ be the condition number of $A$. There exists an algorithm to approximate $(A^{-1}b)(i)$ for a given index $i \in \{1, \ldots, n\}$ with additive error $\epsilon$ and success probability $1-\delta$ by using 
  \begin{align}
    O\left(\poly\left(k, \kappa, \pnorm{F}{A}, \frac{1}{\epsilon}\right)\, \polylog(m, n)\, \log\left(\frac{1}{\delta}\right)\right)
  \end{align}
  queries and time.
  
\end{restatable}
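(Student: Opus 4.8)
\emph{Proof strategy.} The plan is the now-standard ``dequantization'' route of Frieze, Kannan, and Vempala~\cite{FKV04} and Tang~\cite{T18}, adapted so that we can \emph{invert} the recovered singular values. First I would, using the sampling access of Assumption~\ref{asmp:sample-m}, draw $p=\poly(k,\kappa,\|A\|_F,1/\epsilon)$ row indices $i_1,\dots,i_p$ with probability proportional to $\|A(i,\cdot)\|^2$ and form the rescaled row-submatrix $S\in\bbc^{p\times n}$ with $S(t,\cdot)=\frac{\|A\|_F}{\sqrt{p}\,\|A(i_t,\cdot)\|}A(i_t,\cdot)$, so that $\E[S^\dagger S]=A^\dagger A$; then, sampling $p$ column indices according to the row distributions of $S$, extract a further rescaled $p\times p$ submatrix $W$, diagonalize it as $W=\sum_{\ell}\hat\sigma_\ell\hat u_\ell\hat w_\ell^\dagger$, and \emph{define the approximate right singular vectors of $A$ implicitly} by $\hat v_\ell:=\hat\sigma_\ell^{-1}S^\dagger\hat u_\ell\in\bbc^n$. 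The triple $\bigl(S,\{\hat u_\ell\},\{\hat\sigma_\ell\}\bigr)$ is the succinct description: it is produced in time $\poly(p)\,\polylog(m,n)$, and the FKV guarantees give, with high probability and for the internal accuracy parameter $\epsilon'$, that $\|A-A\hat V\hat V^\dagger\|_F\le\sqrt{\epsilon'}\,\|A\|_F$ (where $\hat V=[\hat v_1\,|\,\cdots\,|\,\hat v_k]$), that $|\hat\sigma_\ell^2-\sigma_\ell^2|=O(\epsilon'\|A\|_F^2)$, and that the $\hat v_\ell$ are orthonormal up to $O(\sqrt{\epsilon'})$.

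Next I would define the approximate pseudo-inverse by $\hat A^{-1}:=\hat V\hat\Sigma^{-2}\hat V^\dagger A^\dagger$ (which reduces to $A^{-1}$ when $\hat V,\hat\Sigma$ are exact, since $A^{-1}=(A^\dagger A)^{-1}A^\dagger$), so that
\[
  (\hat A^{-1}b)(i)=\sum_{\ell=1}^k\hat\sigma_\ell^{-2}\,\hat v_\ell(i)\,\bigl(\hat v_\ell^\dagger A^\dagger b\bigr).
\]
Choosing $\epsilon'$ of order $\poly(1/k,1/\kappa,\norm{A},1/\norm{b},\epsilon)$, the bound $|\hat\sigma_\ell^2-\sigma_\ell^2|=O(\epsilon'\|A\|_F^2)\le\tfrac12\sigma_{\min}(A)^2$ first keeps every retained singular value bounded below, so that $\hat\sigma_\ell^{-2}\le 2\kappa^2/\norm{A}^2$, and then a routine perturbation argument — the $\hat v_\ell$ are within $O(\sqrt{\epsilon'})$ of the true $v_\ell$, the $\hat\sigma_\ell$ within $O(\sqrt{\epsilon'}\|A\|_F)$ of the $\sigma_\ell$, and the inversion magnifies errors by at most $\sigma_{\min}^{-2}$ — yields $\norm{\hat A^{-1}-A^{-1}}=O(\poly(k,\kappa,\|A\|_F)\sqrt{\epsilon'})$, hence $|(\hat A^{-1}b)(i)-(A^{-1}b)(i)|\le\norm{\hat A^{-1}-A^{-1}}\,\norm{b}\le\epsilon/2$.

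It remains to estimate the scalar $(\hat A^{-1}b)(i)$ from the succinct description by sampling. On a query $i$, I would query the $p$ entries $A(i_1,i),\dots,A(i_p,i)$ (these determine $S(\cdot,i)$, hence $\hat v_\ell(i)=\hat\sigma_\ell^{-1}\sum_t\overline{S(t,i)}\,\hat u_\ell(t)$ exactly), and then expand, using $\hat v_\ell^\dagger=\hat\sigma_\ell^{-1}\hat u_\ell^\dagger S$ and $(A^\dagger b)(j)=\sum_r\overline{A(r,j)}\,b(r)$,
\[
  (\hat A^{-1}b)(i)=\sum_{t=1}^p\sum_{j=1}^n\sum_{r=1}^m c_t\,S(t,j)\,\overline{A(r,j)}\,b(r),
  \qquad c_t:=\sum_{\ell=1}^k\hat\sigma_\ell^{-3}\,\hat v_\ell(i)\,\overline{\hat u_\ell(t)},
\]
where $c\in\bbc^p$ is explicitly computable and, by orthonormality of the $\hat u_\ell$, satisfies $\norm{c}^2=\sum_\ell\hat\sigma_\ell^{-6}|\hat v_\ell(i)|^2=O(k\kappa^6/\norm{A}^6)$. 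This triple sum I would estimate by importance sampling: draw $t$ uniformly from $\{1,\dots,p\}$, $j$ from row $i_t$ of $A$ with probability $\propto|A(i_t,j)|^2$ (available from Assumption~\ref{asmp:sample-m}), and $r$ from $b$ with probability $\propto|b(r)|^2$ (Assumption~\ref{asmp:sample-v}); the corresponding unbiased estimator has, after the factors $|A(i_t,j)|^2$ and $|b(r)|^2$ cancel and using $\|S(t,\cdot)\|^2=\|A\|_F^2/p$, second moment $\|A\|_F^4\,\norm{b}^2\,\norm{c}^2=\poly(k,\kappa,\|A\|_F,\norm{b})$. Averaging over $O(\poly(k,\kappa,\|A\|_F)/\epsilon^2)$ samples, with a median-of-means boosting to absorb a $\log(1/\delta)$ factor, gives $(\hat A^{-1}b)(i)$ to additive error $\epsilon/2$ with probability $1-\delta$; combined with the previous paragraph and a union bound over the two randomized stages, this proves the theorem, the total query and time cost being $\poly(k,\kappa,\|A\|_F,1/\epsilon)\,\polylog(m,n)\,\log(1/\delta)$. (Here $\norm{b}$ is taken to be $O(1)$, or kept inside the polynomial; one may equally estimate the $k$ scalars $\hat v_\ell^\dagger A^\dagger b$ separately and then combine, with the same complexity.)

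I expect the main obstacle to be the middle step: controlling the propagation of the FKV approximation error through the \emph{non-linear} inversion of the singular values. Unlike the recommendation-systems setting, here we must simultaneously guarantee that the smallest retained singular value stays bounded away from $0$ and bound the operator-norm perturbation of the pseudo-inverse, which forces $\epsilon'$ to be polynomially small in $1/\kappa$ — this is precisely where the condition number enters the complexity — and requires care to keep all the resulting norm blow-ups polynomial rather than, say, exponential in $k$. The succinct description via $S$ and the SVD of the tiny matrix $W$ is what makes the singular values explicitly accessible and hence invertible; the remaining work — establishing that this description is both sufficiently accurate (the perturbation bounds) and efficiently samplable (the second-moment bound above) — is the technical heart of the argument, while the sampling primitives themselves are by now standard from~\cite{T18}.
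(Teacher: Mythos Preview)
Your overall strategy---subsample via FKV to get $S$ and $W$, diagonalize $W$, form the implicit approximate right singular vectors $\hat v_\ell=\hat\sigma_\ell^{-1}S^\dagger\hat u_\ell$, set $\hat A^{-1}=\hat V\hat\Sigma^{-2}\hat V^\dagger A^\dagger$, and estimate the resulting scalars by importance sampling using the row/entry distributions of $A$ and the distribution of $b$---is exactly the route the paper takes (Algorithm~\ref{alg:query}); your triple-sum estimator is a one-shot version of the paper's Lemma~\ref{lem:samp_4} applied $k$ times, and your parenthetical alternative (``estimate the $k$ scalars $\hat v_\ell^\dagger A^\dagger b$ separately'') is literally what the paper does.

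The one genuine gap is in your ``routine perturbation argument'': you assert that each $\hat v_\ell$ is within $O(\sqrt{\epsilon'})$ of the true $v_\ell$. This is false in general without eigenvalue-gap assumptions on $A$---if $A$ has repeated or nearly repeated singular values, individual singular vectors are not stable (take $A=I$: any orthonormal basis is a valid singular system, so $\hat v_\ell$ need bear no relation to any fixed $v_\ell$). Hence you cannot bound $\bigl\|\hat V\hat\Sigma^{-2}\hat V^\dagger-(A^\dagger A)^{-1}\bigr\|$ term by term as a Davis--Kahan/Wedin argument would. The paper circumvents this by never comparing individual vectors or values: it first proves $\bigl\|\hat V\hat\Sigma^{2}\hat V^\dagger-A^\dagger A\bigr\|_F$ is small (Lemma~\ref{lem:appx_A}), which itself relies on a matrix square-root bound $\|X-Y\|_F\le(2k)^{1/4}\|X^2-Y^2\|_F^{1/2}$ for rank-$\le k$ PSD matrices (Lemma~\ref{lemma:x2-y2}), and then passes to the inverse via the pseudo-inverse stability bound $\|X^{-1}-Y^{-1}\|_F\le 3\|X-Y\|_F/\sigma_{\min}^2$ (Lemma~\ref{lemma:x-1-y-1}), together with the near-orthonormality of the columns of $\hat V$ (Lemma~\ref{lemma:almost-orthonormal}) to control the discrepancy between $(\hat V\hat\Sigma^{2}\hat V^\dagger)^{-1}$ and $\hat V\hat\Sigma^{-2}\hat V^\dagger$. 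These two matrix-level lemmas are precisely the missing piece you flagged as ``the main obstacle''; once they replace your vector-wise perturbation step, the remainder of your sketch goes through essentially unchanged.
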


The next theorem addresses the ability to sample from the answer $A^{-1}b$.
\begin{restatable}{theorem}{mainsample}
  \label{thm:main2}
  Let $A \in \bbc^{m \times n}$ be a matrix that has the sampling access as in Assumption~\ref{asmp:sample-m}, and $b \in \bbc^m$ be a vector with the sampling access as in Assumption~\ref{asmp:sample-v}. Let $\kappa$ be the condition number of $A$. There exists an algorithm to sample from a distribution which is $\epsilon$-close to  $\mathcal{D}_{A^{-1}b}$ in terms of total variation distance with success probability $1-\delta$ by using 
  \begin{align}
    O\left(\poly\left(k, \kappa, \pnorm{F}{A}, \frac{1}{\epsilon}\right)\, \polylog(m, n)\, \log\left(\frac{1}{\delta}\right)\right)
  \end{align}
  queries and time.
\end{restatable}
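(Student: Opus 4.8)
The plan is to dequantize the HHL solver. First, using Assumption~\ref{asmp:sample-m} I would run a Frieze--Kannan--Vempala-type procedure~\cite{FKV04}: sample $p=\poly(k,\kappa,\pnorm{F}{A},1/\epsilon)$ rows of $A$ with probability proportional to their squared norms, rescale them into a matrix $R$, sample columns of $R$ to obtain a small matrix $C$, and compute $\mathrm{SVD}(C)$ exactly in time $\poly(p)$. This produces approximate singular values $\hat\sigma_1,\dots,\hat\sigma_k$ and a small coefficient matrix $M$ for which the columns $\hat v_1,\dots,\hat v_k$ of $\hat V:=R^{\dag}M$ are near-orthonormal and approximate the right singular vectors of $A$; these data, together with the sampled indices, form the succinct description, from which I set $\hat\Sigma:=\Diag(\hat\sigma_1,\dots,\hat\sigma_k)$, $\hat U:=A\hat V\hat\Sigma^{-1}$, and define the approximate solution $\hat x:=\hat V\hat\Sigma^{-1}(\hat U^{\dag}b)=\hat V\hat\Sigma^{-2}\hat V^{\dag}A^{\dag}b$, the analogue of the identity $A^{-1}b=(A^{\dag}A)^{-1}A^{\dag}b$. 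Standard FKV bounds, the fact that $\rank(A)=k$, and the fact that the nonzero singular values lie in $[\norm{A}/\kappa,\norm{A}]$ (so the $k$ approximate ones are separated from spurious near-zero values by a fixed threshold) give $\norm{A-\hat A}\le\eta$ with $\eta$ polynomially small in $p$; a Wedin-type perturbation bound for pseudo-inverses (valid since the ranks match and $\eta\ll\sigma_{\min}(A)$) then bounds $\norm{\hat x-A^{-1}b}$ by $\eta$ times a polynomial in $\kappa$ and $1/\norm{A}$, so I would choose $p$ to make $\norm{\hat x-A^{-1}b}\le\tfrac{\epsilon}{4}\norm{A^{-1}b}$.

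Next, writing $\hat x=R^{\dag}\bigl(M\hat\Sigma^{-1}(\hat U^{\dag}b)\bigr)$ exhibits it as a length-$p$ combination $\hat x=\sum_{t\le p}w(t)\,R(t,\cdot)^{\dag}$ of conjugated, rescaled rows of $A$, with $w:=M\hat\Sigma^{-1}(\hat U^{\dag}b)\in\bbc^{p}$; every entry of $w$ is known exactly except the $k$-vector $\hat U^{\dag}b$, whose $i$-th entry $\hat\sigma_i^{-1}\,\overline{b^{\dag}A\hat v_i}$ is a bilinear form in $b$ and $\hat v_i$ mediated by $A$. Since $\hat v_i$ is itself a short combination of rows of $A$, the rejection-sampling lemma of~\cite{T18} (with Assumption~\ref{asmp:sample-m}) gives sampling access to $\hat v_i$, Assumption~\ref{asmp:sample-v} gives sampling access to $b$, and feeding both into our estimator for $x^{\dag}Ay$ estimates $(\hat U^{\dag}b)(i)$ to arbitrary additive accuracy with $\poly(\cdot)\,\polylog(m,n)$ samples; replacing $\hat U^{\dag}b$ by these estimates yields $\widetilde w\in\bbc^{p}$ and $\widetilde x:=R^{\dag}\widetilde w=\sum_t\widetilde w(t)R(t,\cdot)^{\dag}$. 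Tracking how the estimation error propagates (it is amplified by $\norm{M}$ and by $\hat\Sigma^{-1}$, hence by $\poly(k,\kappa)$), I would tune the accuracy so that $\norm{\hat x-\widetilde x}\le\tfrac{\epsilon}{4}\norm{A^{-1}b}$. Finally, $\widetilde x$ is an explicit combination of $p$ rows of $A$ whose norms and entries we can access, so the rejection-sampling lemma of~\cite{T18} outputs exact samples from $\mathcal D_{\widetilde x}$ with expected overhead $\poly(p)$ per sample (the overhead being bounded using the near-orthonormality of the columns of $\hat V$ and the lower bound on the $\hat\sigma_i$). The same description also returns an entry $\widetilde x(i)=\sum_t\widetilde w(t)\overline{R(t,i)}$ by direct evaluation, which is what proves Theorem~\ref{thm:main1}.

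Combining the two vector estimates gives $\norm{A^{-1}b-\widetilde x}\le\tfrac{\epsilon}{2}\norm{A^{-1}b}$, and the standard inequality $\norm{\mathcal D_u,\mathcal D_v}_{TV}\le 2\norm{u-v}/\norm{u}$ then yields $\norm{\mathcal D_{\widetilde x},\mathcal D_{A^{-1}b}}_{TV}\le\epsilon$. Every randomized subroutine (the FKV sampling, the sampling-access routine for each $\hat v_i$, and the bilinear-form estimators) succeeds with constant probability, so a median-of-$O(\log(1/\delta))$-repetitions argument lowers the total failure probability to $\delta$; since each step costs $\poly(k,\kappa,\pnorm{F}{A},1/\epsilon)\,\polylog(m,n)$, this gives the claimed complexity.

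The main obstacle I expect is the error analysis around the pseudo-inverse. Because $A^{-1}$ amplifies directions by up to $1/\sigma_{\min}(A)=\kappa/\norm{A}$, the low-rank approximation must be accurate in \emph{spectral} norm relative to $\sigma_{\min}(A)$ --- stronger than the Frobenius-type guarantee FKV supplies natively --- and one must ensure that exactly $k$ approximate singular values survive the threshold, since keeping a spurious tiny one or dropping a genuine one is fatal after inversion. Equally delicate is bounding the variance of the nested estimators behind $\hat U^{\dag}b$ (an outer sample from $b$, and inside each evaluation of $b^{\dag}A\hat v_i$ an inner sample from a row of $A$): these quantities carry the inverted singular values, so it is their \emph{relative} error that has to be controlled, and that control is what forces most of the $\poly(k,\kappa,\pnorm{F}{A},1/\epsilon)$ overhead in the final bound.
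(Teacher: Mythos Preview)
Your proposal is correct and follows essentially the same route as the paper's Algorithm~\ref{alg:samp}: FKV subsampling to obtain the succinct description $(\hat V,\hat\Sigma)$, the approximation $A^{-1}b\approx\hat V\hat\Sigma^{-2}\hat V^{\dag}A^{\dag}b$, estimation of the $k$ entries of $\hat V^{\dag}A^{\dag}b$ via the bilinear-form estimator (the paper's Lemma~\ref{lem:samp_4}), and Tang's rejection sampling for the final output. The only cosmetic differences are that where you invoke a Wedin-type pseudo-inverse perturbation bound the paper instead proves custom lemmas (Lemmas~\ref{lemma:x2-y2} and~\ref{lemma:x-1-y-1}) to pass from $\pnorm{F}{(A^{\dag}A)^2-(\widehat{A}^{\sim2})^2}$ to $\pnorm{F}{(A^{\dag}A)^{-1}-\widehat{A}^{\sim-2}}$, and the paper organizes the final sampling around the $k$ near-orthonormal columns of $V$ (via Lemma~\ref{lem:samp_3}) rather than the $p$ rows of $R$.
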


\noindent
\textbf{Remarks:}
\begin{enumerate}
  \item When $b$ is not entirely in the left-singular vector space of $A$, elements of $A^{-1}b$ might be so small that the additive error by sampling and approximation dominates the value of the Algorithms' outcomes. However, if $b$ has little or zero overlap with this space, we can detect this case by evaluating the inner product $b^{\dag}AA^{-1}b$. We leave the discussion to Subsection~\ref{subsec:b-not-in-a}.
  \item Although our algorithms have poly-logarithmic time complexity, there are large constant factors and exponents in the polynomial (see the proofs of the main theorems in Section~\ref{sec:main-alg}). We expect that those large constant factors and exponents are just consequences of our analysis. In practice, we expect the number of samples needed is much smaller than the upper bound we give.
  \item Let $M \in \bbc^{n \times n}$ be a matrix (with no sampling assumption on it). Then $x^{\dag}Mx$ can be estimated efficiently (using Lemma~\ref{lem:samp_4}).
  \item When $A$ is positive semidefinite, our algorithms can be adapted so that no sampling assumption on $b$ (Assumption~\ref{asmp:sample-v}) is needed. This is discussed in Subsection~\ref{subsec:no-sampling-b}. 
\end{enumerate}

\subsection{Outline of the algorithms}
Our algorithms consist in two main steps. The first step is to sample a small submatrix from $A$ and compute the singular values and singular vectors of this small submatrix. Then, $A^{\dag}A$ can be approximately reconstructed from these singular values and singular vectors of this small submatrix. However, instead of reconstructing $A^{\dag}A$, we perform the second step: sampling or query to $A^{-1}b$ based on the properties of these singular values and singular vectors of this small submatrix. In the following, we describe the intuitions of these two steps.

\paragraph{Subsampling from $A$.} The subsampling methods is same as that of \cite{FKV04}; however, we use a new succinct description of the approximation of $A^\dag A$ which allows us to manipulate the singular values of $A$.

The intuition of the subsampling and succinct description is as follows. For a matrix $A \in \bbc^{m \times n}$ of rank $k$, first we sample $k$ rows from $A$ to obtain a submatrix $S \in \bbc^{k \times n}$. Then we sample $k$ columns from $S$ to obtain a submatrix $W \in \bbc^{k \times k}$. We compute the singular values $\hat{\sigma}_1, \ldots, \hat{\sigma}_k$ of $W$ and their corresponding left singular vectors $\hat{u}_1, \ldots, \hat{u}_k$. Let $V \in \bbc^{n \times k}$ be the matrix formed by the column vectors $\frac{S^{\dag}}{\hat{\sigma}_i}\hat{u}_i$, and let $D \in \bbr^{k \times k}$ be the diagonal matrix with diagonal entries $\hat{\sigma}_1, \ldots, \hat{\sigma}_k$. We define $\widehat{A}^{\sim2} = VD^2V^{\dag}$, which is our approximation of $A^{\dag}A$. The fact that $\widehat{A}^{\sim2}$ is close to $A^{\dag}A$ is established in Lemma~\ref{lemma:subsample}. Now, define $\widehat{A}^{\sim-2} = VD^{-2}V^{\dag}$, and the distance between $\widehat{A}^{\sim-2}$ and $(A^{\dag}A)^{-1}$, is bounded by Lemma~\ref{lem:nx}. In the next step, we work on $\widehat{A}^{\sim-2}$. Note that the data that our algorithms actually store are only the singular values and left singular vectors of $W$, as well as the indices we sampled to form $S$ and $W$. We do not store $V$, $S$, $\widehat{A}^{\sim2}$, or $\widehat{A}^{\sim-2}$ , as those matrices are too large to be efficiently stored. What we do provide is sampling and query access to $V$ from our stored data. The presence of $\widehat{A}^{\sim2}$ and $\widehat{A}^{\sim-2}$ is just for analysis.

\paragraph{Sampling from and querying to $\widehat{A}^{\sim-2}A^{\dag}b$.} Note that $A^{-1} = (A^{\dag}A)^{-1}A^{\dag} \approx \widehat{A}^{\sim-2}A^{\dag}$, so it suffices to work on $\widehat{A}^{\sim-2}A^{\dag}b$. In this step, we develop a new sampling tool: given query and sample access to vectors $u \in \bbc^n$ and $v \in \bbc^m$ and query access to a matrix $A \in \bbc^{n \times m}$, we give a procedure to estimate $u^{\dag}Av$ in sublinear time (Lemma~\ref{lem:samp_4}). Using this procedure $k$ times, we obtain the $k \times 1$ vector $V^{\dag}A^{\dag}b$. Then $D^{-2}V^{\dag}A^{\dag}b$ can be computed efficiently, and by using the sampling tools in~\cite{T18}, the vector $V(D^{-2}V^{\dag}A^{\dag}b)$ can be sampled and queried in sublinear time.

\section{Sampling assumptions and data structure}
\label{sec:assump}

We are interested in developing sublinear-time algorithms for linear systems, so we need to concern ourselves with the way the input matrix and vector are given. Obviously, one cannot load the full matrix and vector into the memory since parsing them requires at least linear time. In this paper, we assume the matrix and vector can be sampled according to some natural probability distributions that arise in many applications in machine learning (see~\cite{KP16,T18,T18-2}, and also discussed in~\cite{FKV04}).

We first present the sampling assumptions for a matrix. Intuitively speaking, we assume that we can sample a row index according to the norms of its row vectors, and for each row, we can sample an entry according to the absolute values of the entries in that row.
\begin{assumption}
  \label{asmp:sample-m}
  Let $M \in \bbc^{m \times n}$ be a matrix. Then, the following conditions hold
  \begin{enumerate}
    \item We can sample a row index $i \in \{1, \ldots, m\}$ of $M$ where the probability of row $i$ being chosen is
      \begin{align}
        P_i = \frac{\norm{M(i, \cdot)}^2}{\pnorm{F}{M}^2}.
      \end{align}
    \item For all $i \in \{1, \ldots, m\}$, we can sample an index $j \in \{1, \ldots, n\}$ according to $\mathcal{D}_{M(i,\cdot)}$, i.e., the probability of $j$ being chosen is
      \begin{align}
        \mathcal{D}_{M(i, \cdot)}(j) = \frac{|M(i,j)|^2}{\norm{M(i,\cdot)}^2}.
      \end{align}
  \end{enumerate}
\end{assumption}

Similarly, for a vector, we assume that we can sample an entry according to the absolute values of its entries.
\begin{assumption}
  \label{asmp:sample-v}
  Let $v \in \bbc^{n}$ be a vector. We can sample an index $i \in \{1, \ldots, n\}$ according to $\mathcal{D}_v$, i.e., the probability of $i$ being chosen is
      \begin{align}
        \mathcal{D}_v(i) = \frac{|v(i)|^2}{\norm{v}^2}.
      \end{align}
\end{assumption}

In fact, these assumptions are empirical. Frieze \textit{et al.}~\cite{FKV04} used the similar assumptions to develop sublinear algorithms for finding low-rank approximation. As pointed out by~\cite{KP16} and also used in~\cite{T18,T18-2}, there exists low-overhead data structures that fulfill these sampling assumptions. More precisely, we summarize the existence of such data structures for Assumption~\ref{asmp:sample-m} as follows.

\begin{theorem}[\cite{KP16}]
  \label{thm:data-m}
  Given a matrix $A\in \bbc^{m \times n}$ with $s$ non-zero entries, there exists a data structure storing $A$ in space $O(s\log^2n)$, which supports the following operations:  
  \begin{itemize}
    \item Reading and writing $A(i,j)$ in $O(\log^2(mn))$ time.
    \item Evaluating $\norm{A(i, \cdot)}$ in $O(\log^2 m)$ time.
    \item Evaluating $\pnorm{F}{A}^2$ in $O(1)$ time. 
    \item Sampling a row index of $A$ according to statement 1 of Assumption~\ref{asmp:sample-m} in $O(\log^2(mn))$ time.   
    \item For each row, sampling an index $j$ according to statement 2 of Assumption~\ref{asmp:sample-m} in $O(\log^2(mn))$ time. 
  \end{itemize}
\end{theorem}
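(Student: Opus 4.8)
The plan is to build, for the matrix $A$, a two-level collection of balanced binary trees, essentially the ``sample tree'' of~\cite{KP16}. At the bottom level we keep, for each row $i$ that has at least one nonzero entry, a balanced binary tree $T_i$ whose leaves are in bijection with the columns $j$ for which $A(i,j)\neq 0$: the leaf for $(i,j)$ stores the pair $\bigl(A(i,j),\,|A(i,j)|^2\bigr)$, and every internal node stores the sum of $|A(i,\ell)|^2$ over the leaves $\ell$ in its subtree, so that the root of $T_i$ holds $\norm{A(i,\cdot)}^2$. At the top level we keep one more balanced binary tree $T$ whose leaves are the nonzero rows; the leaf for $i$ stores $\norm{A(i,\cdot)}^2$ (a copy of the root value of $T_i$), internal nodes again store subtree sums, and the root of $T$ holds $\pnorm{F}{A}^2$. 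A hash table mapping a position $(i,j)$ to the location of the corresponding leaf (and $i$ to the root of $T_i$) lets us reach any entry in $O(\log(mn))$ time; alternatively one keys $T_i$ by the column index so that descent is an ordinary binary search.

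Given this structure the listed operations are immediate. Evaluating $\pnorm{F}{A}^2$ is a single read of the root of $T$, hence $O(1)$. Evaluating $\norm{A(i,\cdot)}$ is a read of the root of $T_i$ followed by a square root, which is $O(\log^2 m)$ once the cost of locating it and of arithmetic on the $O(\log(mn))$-bit stored values is included. Reading $A(i,j)$ is a descent of $T_i$ to the appropriate leaf, $O(\log n)$ levels, i.e.\ $O(\log^2(mn))$ time. Writing $A(i,j)\leftarrow x$ descends to the leaf, overwrites it, walks back to the root of $T_i$ updating each partial sum, and finally updates leaf $i$ of $T$ and walks to the root of $T$; this touches $O(\log m+\log n)$ nodes, giving $O(\log^2(mn))$ time (an insertion or deletion when $x$ or the old value is zero also triggers a rebalancing, which stays within the same bound for any standard balanced tree). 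To sample a row as in statement~1 of Assumption~\ref{asmp:sample-m}, start at the root of $T$ and repeatedly move to the left or right child with probability proportional to the sum stored there, stopping at a leaf; the probability of ending at leaf $i$ telescopes to $\norm{A(i,\cdot)}^2/\pnorm{F}{A}^2 = P_i$, and the walk has length $O(\log m)$. Sampling $j$ in row $i$ as in statement~2 is the same walk carried out in $T_i$, ending at leaf $j$ with probability $|A(i,j)|^2/\norm{A(i,\cdot)}^2 = \mathcal{D}_{A(i,\cdot)}(j)$, with length $O(\log n)$. Each sampling routine thus costs $O(\log^2(mn))$ once the $O(\log(mn))$ random bits and the comparison at each level are accounted for.

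For the space bound, observe that $T_i$ has $O(\mathrm{nnz}(A(i,\cdot)))$ nodes, so the trees $\{T_i\}$ together contain $O(s)$ nodes, and $T$ has at most $s$ leaves and hence $O(s)$ nodes as well; together with a hash table of size $O(s)$ and a constant number of bounded-precision values stored at each node, a careful accounting of the word size yields the $O(s\log^2 n)$ bound of~\cite{KP16}.

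The construction is routine; the one point that genuinely requires care is the numerical precision of the partial sums. The theorem asserts the \emph{exact} distributions $P_i$ and $\mathcal{D}_{M(i,\cdot)}$, but finite-precision partial sums realize these distributions only approximately, so one must either work in a real-RAM / exact-arithmetic model or quantify the resulting total-variation error and propagate it through the depth of the trees; choosing $\Theta(\log(mn))$ bits of precision (absorbed into the word size) suffices and keeps the probabilities within the tolerance needed by the downstream algorithms. The remaining choices—any balanced binary search tree such as a red--black tree, or a static segment tree over $\{1,\dots,n\}$ and $\{1,\dots,m\}$ when the index sets are fixed in advance—are standard, and with them the stated time and space bounds follow directly from the $O(\log m)$ and $O(\log n)$ depth bounds.
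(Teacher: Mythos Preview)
Your proposal is correct and follows exactly the construction that the paper sketches (and attributes to~\cite{KP16}): a binary tree per vector with leaves storing $|v(i)|^2$ together with $v(i)$ and internal nodes storing subtree sums, layered so that the row-norm tree sits on top of the per-row trees. The paper gives only a one-paragraph intuition and a figure, deferring details to~\cite{KP16}; your write-up simply fills in those details (two-level structure, update path on writes, telescoping sampling walk, node count for the space bound, and a remark on precision), so there is no substantive difference in approach.
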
 

As a special case of the Theorem~\ref{thm:data-m} (where the matrix contains a single row), the existence of such data structures for Assumption~\ref{asmp:sample-v} is summarized as follows.
\begin{theorem}\label{thm:data-v}
  Given a vector $v\in \bbc^{n}$ with $s$ non-zero entries, there exists a data structure storing $v$ in space $O(s\log^2n)$, which supports the following operations:  
  \begin{itemize}
    \item Reading and writing $v(i)$ in $O(\log^2n)$ time.
    \item Evaluating $\norm{v}^2$ in $O(1)$ time. 
    \item Sampling an index of $v$ according to Assumption~\ref{asmp:sample-v} in $O(\log^2n)$ time.   
  \end{itemize}
\end{theorem}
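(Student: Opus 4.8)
The plan is to derive Theorem~\ref{thm:data-v} as the degenerate ($m=1$) case of Theorem~\ref{thm:data-m}: apply that data structure to the $1\times n$ matrix $A$ whose unique row equals $v$. The five bullets of Theorem~\ref{thm:data-m} then specialize exactly to the three bullets claimed here — ``reading/writing $A(1,i)$'' becomes ``reading/writing $v(i)$'', ``evaluating $\pnorm{F}{A}^2$'' becomes ``evaluating $\norm{v}^2$'', and statement~2 of Assumption~\ref{asmp:sample-m} applied to the single row $A(1,\cdot)$ becomes Assumption~\ref{asmp:sample-v} — with the quoted time and space bounds inherited after substituting $m=1$. Strictly speaking nothing new is needed; but since the underlying construction is short and self-contained, I would also spell it out.

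Concretely, I would maintain a balanced binary search tree whose leaves are indexed by $\{1,\dots,n\}$, in which only the at most $s$ leaves with $v(i)\ne 0$ (together with their ancestors) are explicitly stored. Each leaf for index $i$ stores the value $v(i)$ and the weight $|v(i)|^2$; each internal node stores the sum of the weights of the leaves in its subtree, so in particular the root stores $\sum_i |v(i)|^2 = \norm{v}^2$. There are $O(s)$ stored nodes, each holding $O(1)$ numbers of bit-length $O(\log n)$ (the indices and the fixed-precision weights/values), and since each nonzero entry is touched along one root-to-leaf path of length $O(\log n)$, the total space is $O(s\log^2 n)$.

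For the three operations: (i) to read or write $v(i)$, descend to the leaf for $i$ in $O(\log n)$ steps, inserting missing nodes if necessary and, on a write, propagating the change in $|v(i)|^2$ up the path and rebalancing; each visited node costs $O(\log n)$ for arithmetic on $O(\log n)$-bit numbers, for a total of $O(\log^2 n)$. (ii) $\norm{v}^2$ is read off the root in $O(1)$. (iii) To sample from $\mathcal{D}_v$, perform a top-down random walk from the root: at an internal node whose children have stored weights $w_L,w_R$, go left with probability $w_L/(w_L+w_R)$ and right otherwise (drawing $O(\log n)$ random bits and doing one comparison, $O(\log n)$ work), and output the index of the leaf reached. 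By telescoping, the probability of ending at leaf $i$ is $|v(i)|^2/\norm{v}^2=\mathcal{D}_v(i)$; the walk has depth $O(\log n)$, so the cost is $O(\log^2 n)$.

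There is no real obstacle here — this is essentially a textbook prefix-sum/segment-tree over the support of $v$. The only points needing (standard) care are keeping the tree balanced under insertions so that every root-to-leaf path stays of length $O(\log n)$, and fixing the numerical precision of the stored partial sums so that all arithmetic runs in $O(\log n)$ time while the sampling distribution remains $\mathcal{D}_v$ (exactly, or up to negligible error); both are handled exactly as in the proof of Theorem~\ref{thm:data-m} in~\cite{KP16}.
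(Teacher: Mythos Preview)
Your proposal is correct and follows the paper's approach exactly: the paper explicitly introduces Theorem~\ref{thm:data-v} as the single-row special case of Theorem~\ref{thm:data-m}, and then sketches the same binary-tree construction you describe (leaves store $|v(i)|^2$ together with $v(i)$, internal nodes store subtree sums, root holds $\norm{v}^2$, sampling is a top-down random walk). Your write-up is in fact more detailed than the paper's, which defers the full argument to~\cite{KP16}.
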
 

For the details of the proof for Theorem~\ref{thm:data-m}, one may refer to~\cite{KP16}. Here, we give the intuition of the data structure as follows. It suffices to show how to sample from a single vector. As demonstrated in Fig.~\ref{fig:ds}, we use a binary tree to store the data of a vector: the square of the absolute value of each entry, along with its original value (which is a complex number) are stored in the leaf nodes; each internal node contains the sum of the values of its two immediate children. In this way, the root note contains the norm of this vector. To sample an index and to query an entry from this vector, logarithmic steps suffice.

\begin{figure}[ht]
  \centering
  \includegraphics[width=0.8\textwidth]{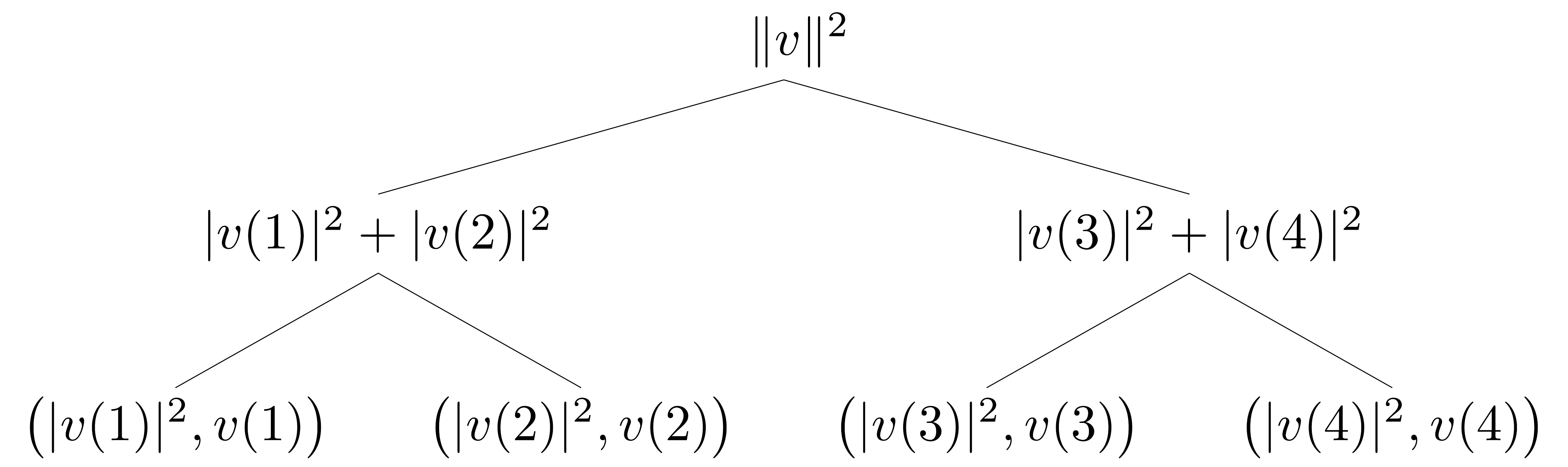}
  \caption{Illustration of a data structure that allows for sampling access to a vector $v \in \bbc^4$. \label{fig:ds}}
\end{figure}

\section{Technical lemmas}
\label{sec:tech-lemmas}
In this section, we present the technical lemmas that will be used to prove the main results. We first prove some tools to bound the distance between matrices, which will be used in Section~\ref{sec:samping} to show the succinct description obtained from the sampled submatrix is a good approximation to the desired matrix. Then, we show how to sample from a vector obtained from a matrix-vector multiplication and how to estimate $x^{\dag}Ay$ in logarithmic time, provided we are given the sampling ability as in Assumptions~\ref{asmp:sample-m} and \ref{asmp:sample-v}. These sampling techniques will be used to prove the main theorem in Section~\ref{sec:main-alg}. Some of these lemmas might be of independent interest.

\subsection{Distance between matrices}
The following lemma shows that if the distance between the squares of two positive semidefinite matrices is small, then the distance between these two matrices should also be small.
\begin{lemma}
  \label{lemma:x2-y2}
  Let $X, Y \in \bbc^{n\times n}$ be positive semidefinite matrices with $\max\{\rank(X), \rank(Y)\} = k$. It holds that $\pnorm{F}{X - Y} \leq (2k)^{1/4}\pnorm{F}{X^2-Y^2}^{1/2}$.
\end{lemma}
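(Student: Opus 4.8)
The plan is to reduce the matrix inequality to a scalar inequality among the eigenvalues of $Z := X - Y$, exploiting positive semidefiniteness only through one elementary observation. The starting point is the algebraic identity $X^2 - Y^2 = XZ + ZY$, which I would pair with $Z$ in the Frobenius inner product. Since $X^2 - Y^2$ and $Z$ are Hermitian, cyclicity of the trace gives
\[
  \langle X^2 - Y^2, Z\rangle_F \;=\; \Tr\!\big((X^2 - Y^2)Z\big) \;=\; \Tr(XZ^2) + \Tr(YZ^2) \;=\; \Tr\!\big((X+Y)Z^2\big).
\]
On one side, Cauchy--Schwarz bounds the left-hand side by $\pnorm{F}{X^2-Y^2}\,\pnorm{F}{Z}$; on the other side, $\Tr((X+Y)Z^2)$ is a trace of a product of two positive semidefinite matrices, hence nonnegative, and the task becomes to lower bound it in terms of $\pnorm{F}{Z}$.

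Next I would diagonalize $Z = \sum_i \lambda_i w_i w_i^\dagger$ with $\lambda_i \in \bbr$ and $\{w_i\}$ orthonormal, noting that the number of nonzero $\lambda_i$ is $\rank(Z) \le \rank(X) + \rank(Y) \le 2k$. Then $\Tr((X+Y)Z^2) = \sum_i \lambda_i^2\, w_i^\dagger(X+Y)w_i$. The key step---and the only place the hypothesis $X, Y \succeq 0$ is used---is the pointwise estimate $w_i^\dagger(X+Y)w_i \ge |\lambda_i|$: since $2X = (X+Y) + Z \succeq 0$ and $2Y = (X+Y) - Z \succeq 0$, evaluating these in the unit vector $w_i$ and using $w_i^\dagger Z w_i = \lambda_i$ gives $w_i^\dagger(X+Y)w_i \ge \pm\lambda_i$. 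Consequently $\Tr((X+Y)Z^2) \ge \sum_i |\lambda_i|^3$.

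Finally I would close the loop with Hölder's inequality applied to the at most $2k$ nonzero eigenvalues, with exponents $3/2$ and $3$: $\pnorm{F}{Z}^2 = \sum_i \lambda_i^2 \le \big(\sum_i |\lambda_i|^3\big)^{2/3}(2k)^{1/3}$, so that $\sum_i |\lambda_i|^3 \ge \pnorm{F}{Z}^3/(2k)^{1/2}$. Chaining the three inequalities yields $\pnorm{F}{Z}^3/(2k)^{1/2} \le \pnorm{F}{X^2-Y^2}\,\pnorm{F}{Z}$; dividing by $\pnorm{F}{Z}$ (the case $Z = 0$ being trivial) and taking square roots gives $\pnorm{F}{X-Y} \le (2k)^{1/4}\pnorm{F}{X^2-Y^2}^{1/2}$. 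I expect the only genuine obstacle to be spotting the right way to invoke the PSD hypotheses---namely the inequality $w^\dagger(X+Y)w \ge |w^\dagger(X-Y)w|$; once that is available, Cauchy--Schwarz together with Hölder and the rank bound are routine.
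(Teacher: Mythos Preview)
Your proof is correct and shares the same two essential ingredients with the paper: the rank bound $\rank(X-Y)\le 2k$ and the pointwise PSD inequality $w^{\dag}(X+Y)w \ge |w^{\dag}(X-Y)w|$ in the eigenbasis of $Z=X-Y$. The packaging, however, is different. The paper lower-bounds $\pnorm{F}{X^2-Y^2}^2$ directly by writing $X^2-Y^2 = Y\Delta + \Delta X$, expanding the Frobenius norm in the eigenbasis $\{v_i\}$ of $\Delta$, and keeping only the diagonal terms $|v_i^{\dag}(Y\Delta+\Delta X)v_i|^2 = \lambda_i^2\,|v_i^{\dag}(X+Y)v_i|^2 \ge \lambda_i^4$; a Cauchy--Schwarz step then gives $\sum_i \lambda_i^4 \ge (2k)^{-1}\big(\sum_i \lambda_i^2\big)^2$. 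You instead pair $X^2-Y^2$ with $Z$ in the Frobenius inner product to obtain $\Tr((X+Y)Z^2) = \sum_i \lambda_i^2\, w_i^{\dag}(X+Y)w_i \ge \sum_i |\lambda_i|^3$, apply Cauchy--Schwarz on the pairing side, and finish with H\"older (exponents $3/2$ and $3$). Your route avoids the ``drop the off-diagonal terms'' step and works purely with traces, which some would find cleaner; the paper's route stays entirely with Frobenius norms and uses only Cauchy--Schwarz. Both lead to the identical constant $(2k)^{1/4}$.
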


\begin{proof}
  Since $X$ and $Y$ are positive semidefinite, we can diagonalize them. As the maximum of their ranks is $k$, the numbers of their nonzero eigenvalues are no larger than $k$.  Consider the space spanned by the union of their eigenvectors. The space is at most $2k$-dimensional, and $X$ and $Y$ are both trivially in this space.  
  
  Let $\Delta = X-Y$, and let $\{\lambda_1, \ldots, \lambda_{2k}\}$ and $\{v_1, \ldots, v_{2k}\}$ be the eigenvalues and eigenvectors of $\Delta$, i.e. $\Delta v_i = \lambda_i v_i$, for all $i \in \{1, \ldots, 2k\}$. Because $\Delta$ is symmetric, $\lambda_i$'s are real. Together with fact that $v_i^{\dag} (X-Y)v_i=\lambda_i$, we know either $v_i^{\dag} (X-Y)v_i=|\lambda_i|$ or $v_i^{\dag} (Y-X)v_i=|\lambda_i|$ is true. Since $X$ and $Y$ are positive semidefinite, we have
  \begin{align}
    \label{eq:A+B}
    \left|v_i^{\dag} (X+Y)v_i\right| \geq|\lambda_i|, \text{ for all } i \in \{1, \ldots, 2k\}.
  \end{align}
  We now calculate $\pnorm{F}{X^2-Y^2}$ as follows.
  \begin{align}
    \pnorm{F}{X^2-Y^2}^2 &=  \pnorm{F}{(Y+\Delta)^2-Y^2}  \nonumber \\
                      &=\pnorm{F}{Y \Delta +\Delta X}  \nonumber \\
                      &=\sum_{i,j=1}^{2k} |v_i^{\dag} (Y\Delta +\Delta X) v_j|^2 \nonumber \\
                      & \ge \sum_{i=1}^{2k} |v_i^{\dag} (Y\Delta +\Delta X) v_i|^2 \nonumber \\
                      &=\sum_{i=1}^{2k}  \lambda_i^2 |v_i^{\dag}(Y+X)v_i|^2 \nonumber \\
                      &\ge \sum_{i=1}^{2k} \lambda_i^4 =\sum_{i=1}^{2k} (\lambda_i^2)^2 \left(\frac{1}{2k}\sum_{i=1}^{2k} 1^2\right) \nonumber \\
                      &\geq \frac{1}{2k}\left(\sum_i \lambda_i^2\right)^2 =\frac{1}{2k} \left( \pnorm{F}{X-Y}^2\right)^2,
  \end{align}
  where the second inequality follows from Eq.~\eqref{eq:A+B}, and the third inequality follows from the Cauchy-Schwartz inequality. Multiplying both side by $2k$ and taking the fourth-root, we get the desired inequality.
\end{proof}

The next lemma asserts that if the distance between two positive semidefinite matrices is small, then the distance between their pseudo-inverses is also small.
\begin{lemma}
  \label{lemma:x-1-y-1}
  Let $X, Y \in \bbc^{n\times n}$ be positive semidefinite matrices with $\max\{\rank(X), \rank(Y)\} = k$. Let $\sigma_{\min}=\min\{\sigma_{\min}(X),\sigma_{\min}(Y)\}$, where $\sigma_{\min}(\cdot)$ is the minimum nonzero singular value of a matrix. It holds that 
  \begin{align}
    \pnorm{F}{X^{-1} - Y^{-1}} \leq \frac{3 \pnorm{F}{X-Y}}{\sigma_{\min}^2}.
  \end{align}
\end{lemma}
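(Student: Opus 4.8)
The plan is to reduce the pseudo-inverse statement to the elementary identity for genuine inverses and then to control the extra terms that the null spaces of $X$ and $Y$ produce. Since $X,Y$ are positive semidefinite, write $P_X = X^{-1}X = XX^{-1}$ and $P_Y = Y^{-1}Y = YY^{-1}$ for the orthogonal projections onto $\mathrm{range}(X)$ and $\mathrm{range}(Y)$ (the two products agree because the matrices are Hermitian). Expanding $X^{-1}(Y-X)Y^{-1} = X^{-1}YY^{-1} - X^{-1}XY^{-1} = X^{-1}P_Y - P_XY^{-1}$ and substituting $X^{-1}P_Y = X^{-1} - X^{-1}(I-P_Y)$ and $P_XY^{-1} = Y^{-1} - (I-P_X)Y^{-1}$ yields the exact identity
\[
  X^{-1} - Y^{-1} = X^{-1}(Y-X)Y^{-1} + X^{-1}(I-P_Y) - (I-P_X)Y^{-1}.
\]
So it suffices to show that each of the three summands on the right has Frobenius norm at most $\pnorm{F}{X-Y}/\sigma_{\min}^2$, and then apply the triangle inequality to get the constant $3$.

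For the first summand, submultiplicativity of the Frobenius norm under left/right multiplication by the operator norm gives $\pnorm{F}{X^{-1}(Y-X)Y^{-1}} \le \norm{X^{-1}}\,\pnorm{F}{X-Y}\,\norm{Y^{-1}}$, and since $\norm{X^{-1}} = 1/\sigma_{\min}(X) \le 1/\sigma_{\min}$ and likewise for $Y$, this is at most $\pnorm{F}{X-Y}/\sigma_{\min}^2$. The second summand is the one where the pseudo-inverse matters: because $I-P_Y$ projects onto $\ker Y$ we have $Y(I-P_Y)=0$, and using $X^{-2}X = X^{-1}$ (which is immediate from diagonalizing $X$, or from $X^{-1}$ commuting with $X$ together with $X^{-1}XX^{-1}=X^{-1}$) we get
\[
  X^{-2}(X-Y)(I-P_Y) = X^{-2}X(I-P_Y) - X^{-2}Y(I-P_Y) = X^{-1}(I-P_Y).
\]
Hence $\pnorm{F}{X^{-1}(I-P_Y)} \le \norm{X^{-2}}\,\pnorm{F}{X-Y}\,\norm{I-P_Y} \le \pnorm{F}{X-Y}/\sigma_{\min}(X)^2 \le \pnorm{F}{X-Y}/\sigma_{\min}^2$, using $\norm{X^{-2}} = 1/\sigma_{\min}(X)^2$ and $\norm{I-P_Y}\le 1$. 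The third summand is symmetric: since $(I-P_X)X = 0$ one gets $(I-P_X)Y^{-1} = (I-P_X)(Y-X)Y^{-2}$, which is bounded the same way. Adding the three estimates finishes the proof.

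The only genuine obstacle is recognizing the "$X^{-2}$ factoring trick" that lets one pull $X-Y$ out of the correction term $X^{-1}(I-P_Y)$ — without it, one is stuck with a bound like $\norm{X^{-1}}\,\pnorm{F}{P_X-P_Y}$ and would then need a separate (and slightly delicate) perturbation bound on $\pnorm{F}{P_X-P_Y}$. Everything else is routine bookkeeping with Moore–Penrose identities and norm submultiplicativity; in particular the rank bound $k$ is not needed for this lemma.
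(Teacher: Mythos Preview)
Your proof is correct, and the three-term decomposition you write down is exactly the same as the paper's (the paper writes $X^{-1}-Y^{-1} = X^{-1}(I-\Pi_Y) + (X^{-1}\Pi_Y - \Pi_X Y^{-1}) + (\Pi_X - I)Y^{-1}$ and notes that the middle term equals $X^{-1}(Y-X)Y^{-1}$). The genuine difference is in how the correction terms $X^{-1}(I-P_Y)$ and $(I-P_X)Y^{-1}$ are bounded. The paper first pulls out one factor of $\|X^{-1}\|$ to get $\frac{1}{\sigma_{\min}}\pnorm{F}{\Pi_X(I-\Pi_Y)}$, and then separately proves the projection perturbation bound $\pnorm{F}{\Pi_X(I-\Pi_Y)} \le \pnorm{F}{X-Y}/\sigma_{\min}$ via an auxiliary-vector argument (defining $v_i = Yu_i/\sigma_i$ and computing $\sum_i\|u_i-v_i\|^2$). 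Your ``$X^{-2}$ factoring trick'' --- writing $X^{-1}(I-P_Y) = X^{-2}(X-Y)(I-P_Y)$ --- bypasses that detour entirely and is noticeably shorter and more elementary. The paper's route has the minor side benefit of isolating the projection bound $\pnorm{F}{\Pi_X(I-\Pi_Y)} \le \pnorm{F}{X-Y}/\sigma_{\min}$ as a standalone fact, but for the purpose of this lemma your argument is cleaner. You are also right that the rank hypothesis is not actually needed for the inequality.
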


\begin{proof}
  As in the beginning of the proof of Lemma~\ref{lemma:x2-y2}, we consider the space spanned by the union of the eigenvectors of $X$ and $Y$, which is at most $2k$-dimensional, and $X$ and $Y$ are both in this space. 
  
  We start by diagonalizing $X$ and $Y$ as follows.
  \begin{align}
    X= \sum_{i=1}^{2k} \sigma_i u_i u_i^{\dag}, \quad \text{ and } \quad Y= \sum_{i=1}^{2k} \sigma'_i u'_i u'^{\dag}_i.
  \end{align}
  Let $r=\rank(X)$ and $r'=\rank(Y)$, we define the projectors associated with their nontrivial eigenvectors:
  \begin{align}
    \Pi_X = \sum_{i=1}^{r} u_i u_i^{\dag}, \quad \text{ and } \quad \Pi_Y = \sum_{i=1}^{r'} u'_i u'^{\dag}_i.
  \end{align}
  
  For all $i \in \{1, \ldots, r\}$, define the unnormalized vectors $v_i= \frac{Y u_i}{\sigma_i}$. We have $ v_i \in \sspan\{u_i'\}$ for $i\in \{1, \ldots, r\}$ and 
  \begin{align}
    \sum_{i=1}^r\norm{\sigma_iu_i^{\dag} - u_i^{\dag}Y}^2 = \sum_{i=1}^r\norm{u_i^{\dag}(X - Y)}^2 \leq \pnorm{F}{X-Y}^2.
  \end{align}
  It further implies that,
  \begin{align}
    \label{eq:uv-distance}
    \sum_{i=1}^r\norm{u_i^{\dag} - v_i^{\dag}}^2 = \sum_{i=1}^r\frac{1}{\sigma_i^2}\norm{\sigma_iu_i^{\dag} - \sigma_iv_i^{\dag}}^2 = \sum_{i=1}^r\frac{1}{\sigma_i^2}\norm{\sigma_iu_i^{\dag} - u_i^{\dag}Y}^2 \leq \frac{\pnorm{F}{X-Y}^2}{\sigma_{\min}^2}.
  \end{align}

  Now, We cut $\pnorm{F}{X^{-1}-Y^{-1}}$ into three parts as follows:
  \begin{align}
    \pnorm{F}{X^{-1}-Y^{-1}} &=\pnorm{F}{X^{-1}-X^{-1}\Pi_Y+X^{-1}\Pi_Y-\Pi_XY^{-1} +\Pi_X Y^{-1}-Y^{-1}} \nonumber \\
                          &\leq  \pnorm{F}{X^{-1}-X^{-1}\Pi_Y}+\pnorm{F}{X^{-1}\Pi_Y-\Pi_XY^{-1}} +\pnorm{F}{\Pi_X Y^{-1}-Y^{-1}} \nonumber \\
                          &\leq  \pnorm{F}{X^{-1}\Pi_X(I-\Pi_Y)}+\pnorm{F}{X^{-1}Y Y^{-1} -X^{-1} XY^{-1}} +\pnorm{F}{(\Pi_X-I)\Pi_Y Y^{-1}} \nonumber \\
                          &\leq   \norm{X^{-1}}\pnorm{F}{\Pi_X(I-\Pi_Y)}+\norm{X^{-1}}\pnorm{F}{Y - X}\norm{Y^{-1}} +\pnorm{F}{(\Pi_X-I)\Pi_Y }\norm{Y^{-1}} \nonumber \\
                          &\leq \frac{1}{\sigma_{\min}}\pnorm{F}{\Pi_X(I-\Pi_Y)}+\frac{1}{\sigma_{\min}^2}\pnorm{F}{Y-X} +\frac{1}{\sigma_{\min}}\pnorm{F}{(\Pi_X-I)\Pi_Y } \label{eq:3parts}
  \end{align}
  We then bound $\pnorm{F}{\Pi_X(I-\Pi_Y)}$ as follows:
  \begin{align}
    \pnorm{F}{\Pi_X(I-\Pi_Y)}^2 &=\left\|\Pi_X\left(I- \sum_{j=1}^{r} u_j v_j^{\dag} \right)(I-\Pi_Y)\right\|_F^2 \nonumber \\
                                & \leq \left\|\Pi_X \left(I- \sum_{j=1}^{r} u_j v_j^{\dag} \right)\right\|_F^2 \norm{(I-\Pi_Y)}^2\nonumber \\
                                &= \sum_{i=1}^{2k} \norm{u_i^{\dag} \Pi_X \left(I- \sum_{j=1}^{r} u_j v_j^{\dag} \right)}^2 \nonumber \\
                                &= \sum_{i=1}^{r} \norm{u_i^{\dag}-  v_i^{\dag}}^2 \leq \frac{\pnorm{F}{Y-X}^2}{\sigma_{\min}^2}, \label{eq:pxpy}
  \end{align}
  where the first equality follows from that fact that $v_i \in \sspan\{u_i'\}$ and the last inequality follows from Eq.~\eqref{eq:uv-distance}. By the symmetry between $X$ and $Y$, we also have 
   \begin{align}
     \pnorm{F}{(\Pi_X-1)\Pi_Y} \leq \frac{\pnorm{F}{Y-X}}{\sigma_{\min}}. \label{eq:pypx}
   \end{align}
  Putting Eqns.~\eqref{eq:pxpy} \eqref{eq:pypx} back into Eq.~\eqref{eq:3parts}, we have
  \begin{align}
    \pnorm{F}{X^{-1}-Y^{-1}} \leq \frac{3\pnorm{F}{X-Y}}{\sigma_{\min}^2}.
  \end{align}
\end{proof}

\subsection{Sampling techniques}\label{sec:samp_tech}
The following lemma from Frieze \textit{et al.}~\cite{FKV04} was originally for real matrices, but it is easy to generalize to complex matrices. We do not repeat the proof here.
\begin{lemma}[\cite{FKV04}]
  \label{lemma:subsample}
  Let $M \in \bbc^{m \times n}$ be a matrix. Independently sample $p$ row indices $i_1, \ldots, i_p$ according to the probability distribution $\{\norm{M(1,\cdot)}^2/\pnorm{F}{M}^2, \ldots, \norm{M(m,\cdot)}^2/\pnorm{F}{M}^2\}$. Let $N \in \bbc^{p \times n}$ be the normalized submatrix of $M$ with
  \begin{align}
    N(i_t, \cdot) = \frac{M(i_t, \cdot)}{\sqrt{p\norm{M(i_t,\cdot)}^2}/\pnorm{F}{M}^2},
  \end{align}
  for $t \in \{1, \ldots, p\}$. Then, for all $\theta < 0$, it holds that
  \begin{align}
    \Pr\left(\pnorm{F}{M^TM - N^TN} \geq \theta \pnorm{F}{M}^2\right) \leq \frac{1}{\theta^2p}.
  \end{align}
\end{lemma}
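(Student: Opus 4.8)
The plan is to write $N^{\dagger}N$ as a sum of $p$ i.i.d.\ rank-one random matrices, check that it is an unbiased estimator of $M^{\dagger}M$, bound its variance in Frobenius norm, and conclude with Chebyshev's inequality. (I use $\dagger$ throughout; in the real case this is transpose and the argument is verbatim that of \cite{FKV04}, and the ``$\theta<0$'' in the statement should read ``$\theta>0$''.)

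First I would set, for $t\in\{1,\dots,p\}$,
\[
  Y_t := \frac{\pnorm{F}{M}^2}{p}\cdot\frac{M(i_t,\cdot)^{\dagger} M(i_t,\cdot)}{\norm{M(i_t,\cdot)}^2},
\]
so that the normalization of \cite{FKV04} gives $N^{\dagger}N=\sum_{t=1}^p Y_t$, and the $Y_t$ are i.i.d.\ because the row indices $i_t$ are drawn independently. Since $\Pr[i_t=i]=\norm{M(i,\cdot)}^2/\pnorm{F}{M}^2$, a one-line computation yields
\[
  \E[Y_t]=\frac{1}{p}\sum_{i=1}^m M(i,\cdot)^{\dagger} M(i,\cdot)=\frac{M^{\dagger} M}{p},
  \qquad\text{hence}\qquad \E[N^{\dagger}N]=M^{\dagger}M .
\]

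Next I would bound the second moment of the error. Because the $Y_t$ are independent the off-diagonal terms vanish, so
\[
  \E\!\left[\pnorm{F}{N^{\dagger}N-M^{\dagger}M}^2\right]
  =\sum_{t=1}^p\E\!\left[\pnorm{F}{Y_t-\E Y_t}^2\right]
  \le\sum_{t=1}^p\E\!\left[\pnorm{F}{Y_t}^2\right].
\]
The computation collapses because for any row vector $r$ one has $\pnorm{F}{r^{\dagger} r}^2=\sum_{j,k}|r(j)|^2|r(k)|^2=\norm{r}^4$; applying this to $r=M(i_t,\cdot)$ shows $\pnorm{F}{Y_t}^2=\pnorm{F}{M}^4/p^2$ \emph{deterministically}, so $\E[\pnorm{F}{N^{\dagger}N-M^{\dagger}M}^2]\le\pnorm{F}{M}^4/p$. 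Finally, Chebyshev's inequality (equivalently, Markov applied to the squared quantity) gives, for any $\theta>0$,
\[
  \Pr\!\left(\pnorm{F}{M^{\dagger}M-N^{\dagger}N}\ge\theta\,\pnorm{F}{M}^2\right)
  \le\frac{\E\!\left[\pnorm{F}{N^{\dagger}N-M^{\dagger}M}^2\right]}{\theta^2\pnorm{F}{M}^4}
  \le\frac{1}{\theta^2 p}.
\]

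There is no serious obstacle here; the only points that need care are the bookkeeping of the normalization constants (so that $N^{\dagger}N$ is genuinely unbiased) and the elementary identity $\pnorm{F}{r^{\dagger} r}=\norm{r}^2$, which is precisely what makes the per-sample second moment deterministic and hence the variance bound clean. Passing from real to complex matrices is immediate once transposes are replaced by conjugate transposes everywhere.
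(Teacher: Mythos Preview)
Your argument is correct and is precisely the standard \cite{FKV04} proof (with transposes replaced by conjugate transposes); the paper itself does not reproduce a proof but simply cites \cite{FKV04} and remarks that the extension to complex matrices is immediate, which is exactly what you carry out. Your observation that ``$\theta<0$'' should read ``$\theta>0$'' is also correct.
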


With the data structure in Theorem~\ref{thm:data-m}, one can estimate the inner product and of two vectors and sample the vector resulted from a matrix-vector multiplication. 

\begin{restatable}[\cite{T18}]{lemma}{samplinnerproduct}
  \label{lem:samp_1}
  Let $x,y\in \mathbb{C}^n$. Given query access to $x$ and $y$, the ability to sample from $\mathcal{D}_x$, and the knowledge of $\norm{x}$, one can approximate $\langle x,y\rangle$ to additive error $\epsilon\|x\|\|y\|$ with at least $1-\delta$ probability using $O(\frac{1}{\epsilon^2}\log \frac{1}{\delta})$ queries and samples and the same time complexity. 
\end{restatable}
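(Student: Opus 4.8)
The plan is to build an unbiased estimator of $\langle x,y\rangle$ by \emph{importance sampling} with respect to the distribution $\mathcal{D}_x$, and then drive down its error by averaging and a median trick. Write $\langle x,y\rangle = \sum_{i=1}^n \overline{x(i)}\,y(i)$. Draw an index $i\sim\mathcal{D}_x$ and define the (complex) random variable
\[
  Z \;=\; \frac{\overline{x(i)}\,y(i)}{\mathcal{D}_x(i)} \;=\; \norm{x}^2\,\frac{y(i)}{x(i)},
\]
adopting the convention that the quantity is $0$ whenever $x(i)=0$ (in which case $\mathcal{D}_x(i)=0$ and $\overline{x(i)}y(i)=0$, so no index with $\overline{x(i)}y(i)\ne0$ is ever missed). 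One evaluation of $Z$ costs one sample from $\mathcal{D}_x$, two queries (to $x(i)$ and to $y(i)$), and $O(1)$ arithmetic using the stored value $\norm{x}$.

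Next I would compute the first two moments. Unbiasedness is immediate: $\E[Z] = \sum_i \mathcal{D}_x(i)\cdot \overline{x(i)}y(i)/\mathcal{D}_x(i) = \langle x,y\rangle$. For the second moment,
\[
  \E\!\left[|Z|^2\right] \;=\; \sum_{i}\frac{|x(i)|^2}{\norm{x}^2}\cdot \norm{x}^4\,\frac{|y(i)|^2}{|x(i)|^2} \;=\; \norm{x}^2\sum_i |y(i)|^2 \;=\; \norm{x}^2\norm{y}^2,
\]
hence $\Var[Z]\le \E[|Z|^2] = \norm{x}^2\norm{y}^2$. This variance bound is the heart of the argument, and it is exactly where sampling from $\mathcal{D}_x$ (as opposed to, say, the uniform distribution) is essential: the $|x(i)|^2$ in the denominator of $\mathcal{D}_x(i)$ cancels the $|x(i)|^2$ that would otherwise blow up the variance.

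To finish, take $s$ independent copies $Z_1,\dots,Z_s$ and form $\bar Z = \frac1s\sum_j Z_j$, so $\E[\bar Z]=\langle x,y\rangle$ and $\Var[\bar Z]\le \norm{x}^2\norm{y}^2/s$. Treating the real and imaginary parts separately (each a real random variable with variance at most $\norm{x}^2\norm{y}^2$), Chebyshev's inequality with $s = O(1/\epsilon^2)$ gives $|\bar Z - \langle x,y\rangle|\le \epsilon\norm{x}\norm{y}$ with probability at least, say, $3/4$. Then I would amplify to $1-\delta$ by the median-of-means trick: run $O(\log(1/\delta))$ independent batches, and output (the real and imaginary parts of) the median of the batch estimates; a Chernoff bound over the batches shows the median is within $\epsilon\norm{x}\norm{y}$ of $\langle x,y\rangle$ except with probability $\delta$, after absorbing constants into $\epsilon$. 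The total cost is $O\!\left(\frac{1}{\epsilon^2}\log\frac1\delta\right)$ samples and queries, and the running time is the same up to the $O(1)$ arithmetic per evaluation.

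I do not expect a genuine obstacle here — this is a textbook importance-sampling estimator — but the one point requiring a little care is that $Z$ is complex, so the amplification step cannot use a naive scalar median; handling the real and imaginary coordinates separately (equivalently, a geometric-median argument) removes this wrinkle cleanly.
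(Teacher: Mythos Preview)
Your proposal is correct and is essentially the standard argument: the paper does not actually prove this lemma but simply cites \cite{T18} (noting only that the real-vector proof extends to $\mathbb{C}^n$), and your importance-sampling estimator $Z=\norm{x}^2 y(i)/x(i)$ together with the median-of-means amplification is precisely Tang's proof. It also mirrors the technique the paper does spell out for the closely related Lemma~\ref{lem:samp_4}.
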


\begin{restatable}[\cite{T18}]{lemma}{samplingmv}
  \label{lem:samp_2}
  Let $M\in \mathbb{C}^{n\times k}$ and $v\in \mathbb{C}^k$. Given sampling access to $M$ as in Assumption~\ref{asmp:sample-m}, one can output a sample from the vector $Mv$ with probability $9/10$ in $O(k^2 C(M,v))$ query and time complexity, where 
\[
	C(M,v) := \frac{\sum_{j=1}^{k} \parallel v_jM(\cdot,j)\parallel^2}{\parallel Mv\parallel^2}.
\]
\end{restatable}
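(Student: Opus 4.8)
The plan is to prove Lemma~\ref{lem:samp_2} by a rejection-sampling argument in the style of Frieze--Kannan--Vempala and Tang. Write $(Mv)(i) = \sum_{j=1}^{k} v_j M(i,j)$, so the target distribution is $\mathcal{D}_{Mv}(i) = |(Mv)(i)|^2/\norm{Mv}^2$. Here "sampling access to $M$ as in Assumption~\ref{asmp:sample-m}" should be read with rows and columns interchanged (equivalently, the data structure of Theorem~\ref{thm:data-m} instantiated on the columns of $M$): it gives us the column norms $\norm{M(\cdot,j)}$, the ability to sample an index $i$ within a chosen column according to $\mathcal{D}_{M(\cdot,j)}$, and entrywise query access to $M$; I would make this identification explicit at the outset. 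Using the column norms and $v$, I first build, in $O(k)$ time, a proposal distribution $q$ on pairs $(i,j)$: sample $j$ with probability proportional to $|v_j|^2\norm{M(\cdot,j)}^2$ (all $k$ weights are computable), then, conditioned on $j$, sample $i$ from $\mathcal{D}_{M(\cdot,j)}$. The pair $(i,j)$ then has probability $|v_j M(i,j)|^2/\mathcal{Z}$ with $\mathcal{Z} := \sum_{j} |v_j|^2\norm{M(\cdot,j)}^2$, so the marginal on $i$ is $q_i = \big(\sum_{\ell} |v_\ell M(i,\ell)|^2\big)/\mathcal{Z}$.

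Next comes the rejection step. Given a sampled $i$, I query the $k$ entries $M(i,1),\dots,M(i,k)$, compute $(Mv)(i)$ and $\sum_{\ell} |v_\ell M(i,\ell)|^2$ in $O(k)$ time, and accept $i$ with probability
\[
  \tilde r_i \;=\; \frac{|(Mv)(i)|^2}{k\sum_{\ell=1}^{k} |v_\ell M(i,\ell)|^2},
\]
which lies in $[0,1]$ by the Cauchy--Schwarz inequality $|\sum_{\ell} v_\ell M(i,\ell)|^2 \le k\sum_{\ell} |v_\ell M(i,\ell)|^2$; if rejected, repeat. Conditioned on acceptance, $i$ is returned with probability proportional to $q_i\tilde r_i \propto |(Mv)(i)|^2$, i.e.\ exactly $\mathcal{D}_{Mv}$, and the per-round acceptance probability is $\sum_i q_i\tilde r_i = \norm{Mv}^2/(k\mathcal{Z}) = 1/\big(kC(M,v)\big)$. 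Hence the number of rounds is geometric with mean $kC(M,v)$, each round costs $O(k)$ queries, samples, and time, so the expected total cost is $O\big(k^2 C(M,v)\big)$; aborting after $\Theta\big(kC(M,v)\big)$ rounds and applying Markov's inequality (or the geometric tail bound) yields success probability at least $9/10$ within the claimed budget. Note the algorithm never needs to know $\norm{Mv}$.

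The Cauchy--Schwarz bound and the geometric bookkeeping are routine; the only point that genuinely needs care is matching the sampling primitives of Assumption~\ref{asmp:sample-m} to what the proposal step requires (column norms, within-column sampling, and entry queries) and checking $\tilde r_i\le 1$, so that is where I would be most careful. Everything else is a direct adaptation of the corresponding lemma of Tang~\cite{T18}.
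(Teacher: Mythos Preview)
Your proposal is correct and is precisely the rejection-sampling argument of Tang~\cite{T18} that the paper cites; the paper does not supply its own proof of this lemma but simply invokes~\cite{T18} and notes that the argument extends verbatim to complex entries, which is exactly what you have written out. The one cosmetic tension is that you say the algorithm ``never needs to know $\norm{Mv}$'' yet also speak of ``aborting after $\Theta(kC(M,v))$ rounds''; the cleaner phrasing is that the algorithm is Las Vegas with geometric running time of mean $kC(M,v)$ rounds, so with probability at least $9/10$ it halts within $O(kC(M,v))$ rounds (hence $O(k^2C(M,v))$ cost) without the algorithm itself ever computing $C(M,v)$.
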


\begin{restatable}[\cite{T18}]{lemma}{samplingalmostortho}
  \label{lem:samp_3}
  Let $M\in \mathbb{C}^{n\times k}$ and $ v\in \mathbb{C}^k$. If there exists an isometry $U \in \bbc^{n \times k}$ whose column vectors span the column space of $M$ such that $\pnorm{F}{M - U} \leq \alpha$, then 
  one can sampling from a distribution which is $(\alpha+O(\alpha^2))$-close to $\mathcal{D}_{Mv}$ in $O(k^2(1+O(\alpha)))$ expected query and time complexity.   
\end{restatable}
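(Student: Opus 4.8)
The plan is to implement the sampler via the rejection-sampling procedure underlying Lemma~\ref{lem:samp_2}, run on $M$ and $v$: that procedure, repeated until it succeeds, already returns a sample distributed exactly as $\mathcal{D}_{Mv}$ in expected $O(k^2 C(M,v))$ time, so the whole task reduces to bounding the data-dependent quantity $C(M,v)$ using the hypothesis that $M$ lies $\alpha$-close (in Frobenius norm) to an isometry $U$. The $(\alpha+O(\alpha^2))$-closeness in the statement is then either immediate (the sample is exact) or, if one prefers to phrase the procedure around the idealized $U$, follows from the perturbation bound for sampling distributions; I return to this at the end.

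First I would record why an isometry is ``ideal'' for Lemma~\ref{lem:samp_2}: if $U$ has orthonormal columns then $\norm{U(\cdot,j)}=1$ for every $j$ and $\norm{Uv}=\norm{v}$, hence $C(U,v)=\sum_{j}|v_j|^2/\norm{v}^2=1$. Then I would carry out the relative-perturbation estimate. Set $\delta_j := \norm{M(\cdot,j)-U(\cdot,j)}$, so $\sum_j \delta_j^2 = \pnorm{F}{M-U}^2 \le \alpha^2$; in particular each $\delta_j \le \alpha$ and $\norm{M(\cdot,j)} \le 1+\delta_j$. Thus the numerator of $C(M,v)$ satisfies $\sum_j |v_j|^2 \norm{M(\cdot,j)}^2 \le \sum_j |v_j|^2 (1+\delta_j)^2 \le (1+\alpha)^2 \norm{v}^2$. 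For the denominator, $\norm{Mv} \ge \norm{Uv} - \norm{(M-U)v} \ge \norm{v} - \pnorm{F}{M-U}\norm{v} \ge (1-\alpha)\norm{v}$. Combining, $C(M,v) \le (1+\alpha)^2/(1-\alpha)^2 = 1 + O(\alpha)$ provided $\alpha$ is bounded away from $1$ (say $\alpha \le 1/2$). Feeding this into Lemma~\ref{lem:samp_2} yields the claimed $O(k^2(1+O(\alpha)))$ expected query and time bound.

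For the closeness guarantee as literally stated, the cleanest route is: $\norm{Mv - Uv} = \norm{(M-U)v} \le \pnorm{F}{M-U}\,\norm{v} \le \alpha\norm{v} = \alpha\norm{Uv}$, so $Mv$ is an $\alpha$-relative perturbation of $Uv$; the standard bound of~\cite{T18} on the total variation distance between $\mathcal{D}_x$ and $\mathcal{D}_y$ when $\norm{x-y}\le\alpha\norm{x}$ then gives $\|\mathcal{D}_{Uv},\mathcal{D}_{Mv}\|_{TV} \le \alpha + O(\alpha^2)$. Hence whether the rejection sampler targets $\mathcal{D}_{Mv}$ directly or the idealized $\mathcal{D}_{Uv}$ (the natural choice when only the near-isometry data is carried around), its output is within $\alpha + O(\alpha^2)$ of $\mathcal{D}_{Mv}$.

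The computation is short; the one place to be careful is the relative-perturbation step, where one must use $\pnorm{F}{M-U}\le\alpha$ to control both $\max_j \delta_j$ (for the numerator) and the operator-norm perturbation $\norm{(M-U)v}$ (for the denominator), and must keep $\alpha$ small enough that $(1-\alpha)^{-2}=1+O(\alpha)$ — so the lemma is really a statement about the regime $\alpha=O(1)$, which is all the applications need. The hypothesis that the columns of $U$ span the column space of $M$ is the natural normalization (it identifies $U$ with the orthonormal factor of the polar decomposition of $M$) but is not otherwise needed for the estimates above.
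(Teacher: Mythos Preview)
The paper does not supply its own proof of this lemma; it is quoted from Tang~\cite{T18} with the remark that the real-case argument there extends verbatim to complex matrices. Your proposal is correct and is precisely the intended argument: invoke the rejection sampler of Lemma~\ref{lem:samp_2} and use the hypothesis $\pnorm{F}{M-U}\le\alpha$ to bound $C(M,v)\le(1+\alpha)^2/(1-\alpha)^2=1+O(\alpha)$, which yields the stated expected complexity; since rejection sampling (repeated to success) outputs exactly $\mathcal{D}_{Mv}$, the closeness clause is trivially met, and your alternative reading via $\mathcal{D}_{Uv}$ and the perturbation bound (Lemma~\ref{lem:dist_distance} in this paper) is also the one appearing in~\cite{T18}.
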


Note that when the vectors and the matrices are real, Tang~\cite{T18} has proven Lemma~\ref{lem:samp_1}, Lemma~\ref{lem:samp_2}, and Lemma~\ref{lem:samp_3}. Their proofs can be extended to complex matrices and vectors.   

To solve the linear system, we prove the following lemma. 
\begin{lemma}\label{lem:samp_4}
  Let $x\in \mathbb{C}^m$, $y\in \mathbb{C}^n$, and $A\in \mathbb{C}^{m \times n}$. Given query access to $x$, $A$, and $y$, the ability to sample from $\mathcal{D}_x$ and $\mathcal{D}_y$, and the knowledge of $\norm{x}$ and $\norm{y}$, one can approximate $x^{\dag}Ay$ to additive error $\epsilon$ with at least $1-\delta$ success probability using $O(\frac{\|x\|\|y\|\|A\|_F}{\epsilon^2}\log \frac{1}{\delta})$ queries and samples and the same time complexity. 
\end{lemma}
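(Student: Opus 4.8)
The plan is to write $x^{\dag}Ay = \sum_{i,j} \overline{x(i)}\,A(i,j)\,y(j)$ and estimate this double sum by importance sampling: draw $i$ from $\mathcal{D}_x$ and $j$ from $\mathcal{D}_y$ independently, and form a random variable whose expectation is exactly $x^{\dag}Ay$. Concretely, if $i \sim \mathcal{D}_x$ and $j \sim \mathcal{D}_y$, set
\begin{align}
  Z = \frac{\overline{x(i)}}{|x(i)|^2/\norm{x}^2}\cdot A(i,j)\cdot \frac{y(j)}{|y(j)|^2/\norm{y}^2} = \norm{x}^2\norm{y}^2\,\frac{A(i,j)}{\overline{x(i)}\,\overline{y(j)}}\cdot\frac{|x(i)|^2|y(j)|^2}{|x(i)|^2|y(j)|^2},
\end{align}
so that after simplification $Z = \norm{x}^2\norm{y}^2 \cdot A(i,j)\,\frac{\overline{x(i)}\,y(j)}{|x(i)|^2|y(j)|^2}$. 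One checks $\E[Z] = \sum_{i,j}\frac{|x(i)|^2}{\norm{x}^2}\frac{|y(j)|^2}{\norm{y}^2}\cdot\norm{x}^2\norm{y}^2\,A(i,j)\,\frac{\overline{x(i)}y(j)}{|x(i)|^2|y(j)|^2} = \sum_{i,j}\overline{x(i)}A(i,j)y(j) = x^{\dag}Ay$, as desired. Each sample of $Z$ costs $O(1)$ queries and samples: one sample from $\mathcal{D}_x$, one from $\mathcal{D}_y$, one query to an entry of $A$, and the known norms $\norm{x},\norm{y}$.

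Next I would bound the variance. Since $\Var(Z) \le \E[|Z|^2]$ and $|Z|^2 = \norm{x}^4\norm{y}^4\,|A(i,j)|^2\,\frac{1}{|x(i)|^2|y(j)|^2}$, we get
\begin{align}
  \E[|Z|^2] = \sum_{i,j}\frac{|x(i)|^2}{\norm{x}^2}\frac{|y(j)|^2}{\norm{y}^2}\cdot\norm{x}^4\norm{y}^4\,\frac{|A(i,j)|^2}{|x(i)|^2|y(j)|^2} = \norm{x}^2\norm{y}^2\sum_{i,j}|A(i,j)|^2 = \norm{x}^2\norm{y}^2\,\pnorm{F}{A}^2.
\end{align}
So a single sample has second moment $\norm{x}^2\norm{y}^2\pnorm{F}{A}^2$. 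Averaging $N$ independent copies $\bar Z = \frac1N\sum_{t=1}^N Z_t$ gives $\E[\bar Z] = x^{\dag}Ay$ and $\Var(\bar Z) \le \norm{x}^2\norm{y}^2\pnorm{F}{A}^2/N$; by Chebyshev, taking $N = O(\norm{x}^2\norm{y}^2\pnorm{F}{A}^2/\epsilon^2)$ makes $\bar Z$ within $\epsilon$ of $x^{\dag}Ay$ with constant probability. Finally, the standard median-of-means amplification — running $O(\log(1/\delta))$ independent such estimators and taking the (componentwise, or complex) median — boosts the success probability to $1-\delta$ at a multiplicative cost of $O(\log(1/\delta))$, yielding total complexity $O\bigl(\frac{\norm{x}^2\norm{y}^2\pnorm{F}{A}^2}{\epsilon^2}\log\frac1\delta\bigr)$.

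One subtlety I should address: the stated complexity in the lemma is $O\bigl(\frac{\norm{x}\norm{y}\pnorm{F}{A}}{\epsilon^2}\log\frac1\delta\bigr)$, which is the square root of what the naive Chebyshev bound above gives. To get this I would rescale: estimate instead $\frac{x^{\dag}Ay}{\norm{x}\norm{y}\pnorm{F}{A}}$ — equivalently work with the normalized random variable $Z/(\norm{x}\norm{y}\pnorm{F}{A})$, which has second moment $1$ — to additive error $\epsilon' = \epsilon/(\norm{x}\norm{y}\pnorm{F}{A})$; but that reintroduces the square. The clean way is to observe the paper is stating the bound for the regime where one wants additive error scaled against $\norm{x}\norm{y}\pnorm{F}{A}$, or else the $\epsilon^2$ in the denominator already absorbs one factor. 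I expect the main obstacle is exactly this bookkeeping: matching the variance bound $\norm{x}^2\norm{y}^2\pnorm{F}{A}^2$ against the claimed sample count, which works out if we interpret the target error as $\epsilon\sqrt{\norm{x}\norm{y}\pnorm{F}{A}}$ or track constants carefully — the probabilistic content (unbiased estimator, second-moment computation, median amplification) is routine and mirrors Lemma~\ref{lem:samp_1}.
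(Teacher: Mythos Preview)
Your approach is essentially identical to the paper's: the same importance-sampling estimator $Z$ with $\E[Z]=x^{\dag}Ay$ and $\E[|Z|^2]=\norm{x}^2\norm{y}^2\pnorm{F}{A}^2$, followed by Chebyshev on group means and a median-of-means boost to confidence $1-\delta$.

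Regarding the discrepancy you flag in the last paragraph: you are not missing anything. The paper's own proof sets the group size to $q = 4\Var[|Z|]/\epsilon^2 = 4\norm{x}^2\norm{y}^2\pnorm{F}{A}^2/\epsilon^2$ and takes $p=O(\log(1/\delta))$ groups, which yields $pq = O\bigl(\norm{x}^2\norm{y}^2\pnorm{F}{A}^2\,\epsilon^{-2}\log(1/\delta)\bigr)$ --- exactly the bound your Chebyshev argument gives --- yet its concluding sentence records the sample count as $O\bigl(\norm{x}\norm{y}\pnorm{F}{A}\,\epsilon^{-2}\log(1/\delta)\bigr)$. This appears to be a typo in the lemma statement (and in the last line of its proof), not a sharper argument you have overlooked; no rescaling trick recovers the unsquared bound from a second-moment estimate of this form.
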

\begin{proof}
Define a random variable $Z$ as follows: 
\[
    Z = \frac{\norm{x}^2\norm{y}^2A(i,j)}{x(i)y^*(j)} \mbox{ with probability } \frac{|x(i)|^2|y(j)|^2}{\|x\|^2\|y\|^2}. 
\]
The expected value and variance of $Z$ are 
\[
    \E[Z] = y^{\dag} Ax, \mbox{ and } \Var[|Z|] = \norm{x}^2\norm{y}^2\norm{A}_F^2. 
\]

Then, we prove this lemma by the technique of median of means. Given $pq$ many samples of $Z$'s, we divide these samples into $p$ groups $S_1,\dots, S_p$. We let $Y_i = \frac{\sum_{j=1}^q Z_j}{p}$ be the mean of $S_i$ and $\tilde{Y}$ be the median of $\{Y_1,\dots,Y_p\}$. The observation is that the median $\tilde{Y}$ is greater than $\E[Z]+\epsilon$ if and only if more than $p/2$ of means in $\{Y_1,\dots,Y_p\}$ are greater than $\E[Z]+\epsilon$. 

First, we show the probability that $Y_i$ is much larger than $\E[Z]$ is bounded for all $i$. We use the Chebyshev inequality for complex random variables as 
\[
    \Pr[|Y_i - \E[Z]|\geq \epsilon ]\leq \frac{\Var[|Z|]}{\epsilon^2 q}.
\]
Let $q = \frac{4\Var[|Z|]}{\epsilon^2}$ so that the above probability is at most $1/4$. Then, let $E_i$ be the event that $Y_i-\E[Z]>\epsilon$ for $i\in \{1, \ldots, p\}$. By using Chernoff-Hoeffding inequality, we have
\[
    \Pr\left[\sum_{i=1}^p E_i - p\Pr[E_i] \geq p/4\right] \leq e^{-p/8}. 
\]
The probability that the event $\tilde{Y}-\E[Z]>\epsilon$ happens is bounded by
\begin{eqnarray*}
    \Pr[|\tilde{Y}-\E[Z]|\leq\epsilon] &\leq& 
    \Pr[\tilde{Y}-\E[Z]\leq\epsilon]\\
    &=&1 - \Pr[\tilde{Y}-\E[Z]\geq\epsilon]\\
    &=& 1- \Pr\left[\sum_{i=1}^p E_i\geq p/2\right]\\
    &\leq& \Pr\left[\sum_{i=1}^p E_i - p\Pr[E_i] \geq p/4\right]\\
    &\leq& 1- e^{-p/8}.
\end{eqnarray*}
Let $\delta = e^{-p/8}$. By sampling $X$ for a number of $pq = O(\norm{x}\norm{y}\norm{A}_F\frac{1}{\epsilon^2}\log (\frac{1}{\delta}))$ times, dividing them randomly in $p$ groups, and outputting the median of means of these groups, one obtains an estimate of $x^{\dag}Ay$ with additive error at most $\epsilon$ and success probability $1-\delta$.  
    
\end{proof}

The following lemma shows that when vectors $x$ and $y$ are close, the total variation distance between $\mathcal{D}_x$ and $\mathcal{D}_y$ is also close. 
\begin{lemma}\label{lem:dist_distance}
For $x,y\in \mathbb{C}^n$ satisfying $\|x-y\|\leq \epsilon$, it holds that $\|\mathcal{D}_x,\mathcal{D}_y\|_{TV} \leq \frac{2\epsilon}{\|x\|}$. 
\end{lemma}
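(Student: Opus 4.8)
The plan is to bound the total variation distance $\|\mathcal{D}_x, \mathcal{D}_y\|_{TV} = \frac12 \sum_i \bigl| |x(i)|^2/\|x\|^2 - |y(i)|^2/\|y\|^2 \bigr|$ by relating it to $\|x - y\|$. The natural first step is to recognize $\mathcal{D}_x$ and $\mathcal{D}_y$ as the squared-magnitude distributions of the unit vectors $\hat x = x/\|x\|$ and $\hat y = y/\|y\|$, and to recall the standard fact (used throughout this line of work, e.g.\ in~\cite{T18}) that for unit vectors the TV distance between these distributions is controlled by the Euclidean distance between the vectors: concretely, $\sum_i \bigl| |\hat x(i)|^2 - |\hat y(i)|^2 \bigr| = \sum_i |\hat x(i) - \hat y(i)|\,|\hat x(i) + \hat y(i)| \le \|\hat x - \hat y\|\,\|\hat x + \hat y\| \le 2\|\hat x - \hat y\|$ by Cauchy--Schwarz and the triangle inequality. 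So it suffices to show $\|\hat x - \hat y\| \le \epsilon/\|x\|$.

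First I would therefore reduce everything to estimating $\|x/\|x\| - y/\|y\|\|$ in terms of $\|x - y\| \le \epsilon$. Write
\begin{align*}
  \left\| \frac{x}{\|x\|} - \frac{y}{\|y\|} \right\|
  = \left\| \frac{x - y}{\|x\|} + y\left( \frac{1}{\|x\|} - \frac{1}{\|y\|}\right) \right\|
  \le \frac{\|x-y\|}{\|x\|} + \frac{\bigl|\,\|y\| - \|x\|\,\bigr|}{\|x\|},
\end{align*}
using $\|y\| \cdot \bigl|1/\|x\| - 1/\|y\|\bigr| = \bigl|\|y\|-\|x\|\bigr|/\|x\|$. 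Then the reverse triangle inequality gives $\bigl|\,\|x\| - \|y\|\,\bigr| \le \|x - y\|$, so the whole expression is at most $2\|x-y\|/\|x\| \le 2\epsilon/\|x\|$. Combining with the first paragraph, $\|\mathcal{D}_x,\mathcal{D}_y\|_{TV} \le \frac12 \cdot 2 \cdot \|\hat x - \hat y\| = \|\hat x - \hat y\| \le 2\epsilon/\|x\|$, which is the claimed bound. (One should handle the trivial degenerate cases $x = 0$ or $y = 0$ separately, but these are excluded since $\mathcal{D}_x$ is only defined for nonzero vectors.)

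I do not expect any real obstacle here — the lemma is a routine normalization estimate. The only mild subtlety is being careful with the two sources of discrepancy between $\mathcal{D}_x$ and $\mathcal{D}_y$: the direction change (captured by $\|x-y\|/\|x\|$) and the norm change (captured by $\bigl|\|x\|-\|y\|\bigr|/\|x\|$), and noting that each is separately bounded by $\epsilon/\|x\|$ so that they sum to $2\epsilon/\|x\|$ and the factor $\tfrac12$ in the definition of TV distance cancels the factor $2$ from $\|\hat x + \hat y\| \le 2$. Everything else is Cauchy--Schwarz and the triangle inequality.
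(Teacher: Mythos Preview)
Your approach is essentially the same as the paper's: factor the difference of squared magnitudes, apply Cauchy--Schwarz, and then bound $\|\hat x - \hat y\|$ by splitting into the direction and norm contributions. One technical slip: for complex entries the identity $\bigl|\,|\hat x(i)|^2 - |\hat y(i)|^2\,\bigr| = |\hat x(i)-\hat y(i)|\,|\hat x(i)+\hat y(i)|$ is \emph{false} (take $\hat x(i)=1$, $\hat y(i)=i$); what is true is the inequality $\bigl|\,|a|^2-|b|^2\,\bigr| \le |a^2-b^2| = |a-b|\,|a+b|$, which suffices for your chain. The paper sidesteps this by first replacing $x,y$ with the real nonnegative vectors $\bar x,\bar y$ of entrywise absolute values (so the factorization becomes an honest equality, and $\|\bar x-\bar y\|\le\|x-y\|$ by the scalar reverse triangle inequality). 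Also, your sentence ``it suffices to show $\|\hat x-\hat y\|\le \epsilon/\|x\|$'' overstates what you need; you actually prove $\|\hat x-\hat y\|\le 2\epsilon/\|x\|$, and since the $\tfrac12$ from the TV definition cancels the $2$ from $\|\hat x+\hat y\|\le 2$, that is exactly what gives the claimed bound.
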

\begin{proof}
This lemma has been proven in~\cite{T18} when $x,y\in\mathbb{R}^n$. In the following, we show that it is also true for $x,y\in\mathbb{C}^n$.

Let $\bar{x}$ and $\bar{y}$ be the vectors with $\bar{x}(i) = |x(i)|$ and $\bar{y}(i) = |y(i)|$ for all $i \in \{1, \ldots, n\}$. We have

\begin{eqnarray*}
\|\mathcal{D}_x,\mathcal{D}_y\|_{TV} &=& \frac{1}{2} \sum_{i}^n \left|\frac{|x(i)|^2}{\|x\|^2} - \frac{|y(i)|^2}{\|y\|^2}\right| \\
&=& \frac{1}{2}\left\langle \frac{\bar{x}}{\|x\|}-\frac{\bar{y}}{\|y\|}, \frac{\bar{x}}{\|x\|}+\frac{\bar{y}}{\|y\|}\right\rangle\\
&\leq& \frac{1}{2}\left\|\frac{\bar{x}}{\|x\|}-\frac{\bar{y}}{\|y\|}\|\|\frac{\bar{x}}{\|x\|}+\frac{\bar{y}}{\|y\|}\right\|\\
&\leq& \left\|\frac{\bar{x}}{\|x\|}-\frac{\bar{y}}{\|y\|}\right\|\\
&\leq& \frac{1}{\|x\|} \left\|\bar{x} - \bar{y}+ (\|y\|-\|x\|)\frac{\bar{y}}{\|y\|}\right\|\\
&\leq& \frac{2\epsilon}{\|x\|}. 
\end{eqnarray*}
\end{proof}

\section{Sampling a small submatrix}
\label{sec:samping}
In this section, we show a subroutine (Algorithm~\ref{alg:subsampling}) to generate a succinct description of $VD^2V^{\dag}$, which approximates $A^{\dag}A$. This succinct representation allows for efficiently sampling from any column of $V$ as well as querying any entry of it. The intuition of this subroutine is the following: we first obtain a list of real numbers $\hat{\sigma}_1, \ldots, \hat{\sigma}_k \in \bbr$ and a list of vectors $\hat{u}_1, \ldots, \hat{u}_k \in \bbc^p$ from which the matrix $A^{\dag}A$ can be approximately constructed in time $O(nk^2)$; however, we only need the ability to sample from any column of $V$, and hence we can bypass the construction of $V$ to avoid the linear cost (see Section~\ref{sec:main-alg}).

\begin{algorithm}
  \SetKwInOut{Input}{input}\SetKwInOut{Output}{output}
  \Input{$A \in \bbc^{m \times n}$ that satisfies Assumption~\ref{asmp:sample-m}}
  $p \leftarrow 10^7 \cdot \frac{k^{11}\kappa^{20}}{\epsilon^4\|A\|_F^4}$\;
  Independently sample $p$ row indices $i_1, \ldots, i_p$ according to the probability distribution $\{P_1, \ldots, P_n\}$ defined in Assumption~\ref{asmp:sample-m}\;
  Let $S \in \bbc^{p \times n}$ be the matrix formed by the normalized rows $A(i_t, \cdot)/\sqrt{pP_{i_t}}$ for $t \in \{1, \ldots, p\}$\;\label{step:sampling:s}
  Independently sample $p$ column indices $j_1, \ldots, j_p$ by the following procedure: first sample a row index $t \in \{1, \ldots, p\}$ uniformly at random; then sample a column index $j$ from the probability distribution $\{P_1', \ldots, P_n'\}$, where $P_j' = \sum_{t=1}^p \mathcal{D}_{A(i_t,\cdot)}(j)/p$, and $i_1, \ldots, i_p$ are the indices sampled in step 2\;
  Let $W \in \bbc^{p \times p}$ be the matrix formed by the normalized columns $S(\cdot, j_t)/\sqrt{pP_{j_t}'}$ for $t \in \{1, \ldots, p\}$\;\label{step:w}
  Compute the largest $k$ singular values $\hat{\sigma}_1, \ldots, \hat{\sigma}_k$ of $W$ and their corresponding left singular vectors $\hat{u}_1, \ldots, \hat{u}_k$\;
  Output $\hat{\sigma}_1, \ldots, \hat{\sigma}_k$ and $\hat{u}_1, \ldots, \hat{u}_k$\;
  \caption{Subsampling\label{alg:subsampling}}
\end{algorithm}

We claim that the succinct description obtained from Algorithm~\ref{alg:subsampling} can be used to approximate $A^{\dag}A$ in the sense of the following key lemma.
\begin{lemma}\label{lem:appx_A}
  Let $A \in \bbc^{m \times n}$ be a matrix satisfying Assumption~\ref{asmp:sample-m} with $\rank(A) = k$. Take $A$ as the input of Algorithm~\ref{alg:subsampling} and obtain the $\hat{\sigma}_1, \ldots, \hat{\sigma}_k$ and $\hat{u}_1, \ldots, \hat{u}_k$. Let $S$ be the normalized submatrix obtained from step~\ref{step:sampling:s} of Algorithm~\ref{alg:subsampling}. Define $\widehat{A}^{\sim2}$ as
  \begin{align}
    \widehat{A}^{\sim2} = \sum_{i=1}^k \hat{\sigma}_i^2 \frac{S^{\dag}}{\hat{\sigma}_i} \hat{u}_i \hat{u}_i^{\dag} \frac{S}{\hat{\sigma}_i}.
  \end{align}
  Then, with probability at least $9/10$, it holds that $\pnorm{F}{A^{\dag}A - \widehat{A}^{\sim2}} \leq 2\epsilon \pnorm{F}{A}^3/\left(k^{3/2}\kappa^4\right)$.
\end{lemma}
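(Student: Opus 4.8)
The plan is to chain together the two "distance between matrices" lemmas from Section~\ref{sec:tech-lemmas} with the sampling guarantee of Lemma~\ref{lemma:subsample}. The starting observation is that $\widehat{A}^{\sim2}$ as defined is exactly $\sum_{i=1}^k \hat\sigma_i^2\,(S^\dagger \hat u_i)(S^\dagger \hat u_i)^\dagger$, which (after canceling the $\hat\sigma_i$ factors) is the matrix $S^\dagger \Pi S$ where $\Pi = \sum_{i=1}^k \hat u_i \hat u_i^\dagger$ is the projector onto the top-$k$ left-singular subspace of $W$; since $W$ has rank $k$ (as it is built from rank-$k$ data), $\Pi$ is just the projector onto the column space of $W$ and $WW^\dagger = W\Pi W^\dagger$, so one should first check that $\widehat{A}^{\sim2}$ equals $S^\dagger S$ restricted appropriately — more precisely, I expect $\widehat{A}^{\sim2}$ to be a good proxy for $S^\dagger S$, and the columns sampled to form $W$ give $W W^\dagger \approx S S^\dagger$ in Frobenius norm via Lemma~\ref{lemma:subsample} applied to $S^\dagger$.

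The concrete chain I would run is: first, apply Lemma~\ref{lemma:subsample} with $M = A$ and $p$ rows sampled to get $\pnorm{F}{A^\dagger A - S^\dagger S} \le \theta_1 \pnorm{F}{A}^2$ with the stated failure probability $1/(\theta_1^2 p)$; second, apply Lemma~\ref{lemma:subsample} again with $M = S^\dagger$ (sampling $p$ columns of $S$, i.e.\ rows of $S^\dagger$, which is exactly what step~\ref{step:w} does, noting $\pnorm{F}{S} \approx \pnorm{F}{A}$) to get $\pnorm{F}{S S^\dagger - W W^\dagger} \le \theta_2 \pnorm{F}{S}^2$; third, translate the bound on $\pnorm{F}{SS^\dagger - WW^\dagger}$ back into a bound on $\pnorm{F}{S^\dagger S - \widehat{A}^{\sim2}}$ — here one uses that $\widehat{A}^{\sim2} = S^\dagger \Pi S$ with $\Pi$ the projector onto $\mathrm{col}(W)$, so $S^\dagger S - \widehat{A}^{\sim2} = S^\dagger(I-\Pi)S$, and $\pnorm{F}{S^\dagger(I-\Pi)S} = \pnorm{F}{(I-\Pi)SS^\dagger(I-\Pi)}^{1/2}\cdot(\cdots)$, which via Lemma~\ref{lemma:x2-y2} applied to $SS^\dagger$ and $WW^\dagger = \Pi SS^\dagger \Pi$ (both PSD, rank $\le k$) gives $\pnorm{F}{S^\dagger S - \widehat{A}^{\sim2}} = \pnorm{F}{SS^\dagger - \Pi SS^\dagger \Pi} \le $ something like $(2k)^{1/4}\pnorm{F}{(SS^\dagger)^2 - (\Pi SS^\dagger\Pi)^2}^{1/2}$, but more directly $\pnorm{F}{SS^\dagger - \Pi SS^\dagger\Pi}\le 2\pnorm{F}{SS^\dagger - WW^\dagger}$ by a projector argument; finally combine via the triangle inequality $\pnorm{F}{A^\dagger A - \widehat{A}^{\sim2}} \le \pnorm{F}{A^\dagger A - S^\dagger S} + \pnorm{F}{S^\dagger S - \widehat{A}^{\sim2}}$ and plug in $p = 10^7 k^{11}\kappa^{20}/(\epsilon^4\pnorm{F}{A}^4)$, choosing $\theta_1, \theta_2$ on the order of $\epsilon \pnorm{F}{A}/(k^{3/2}\kappa^4)$ divided by $\pnorm{F}{A}^2$ appropriately so that each piece is at most $\epsilon\pnorm{F}{A}^3/(k^{3/2}\kappa^4)$ and each failure probability is at most $1/20$, giving overall success $\ge 9/10$ by a union bound.

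The main obstacle I anticipate is step three: relating $\widehat{A}^{\sim2} = S^\dagger \Pi S$ to $S^\dagger S$ and controlling $\pnorm{F}{S^\dagger(I-\Pi)S}$ in terms of the column-sampling error $\pnorm{F}{SS^\dagger - WW^\dagger}$. The subtlety is that $\Pi$ is the projector onto the span of the top-$k$ left singular vectors of $W$, and one must argue that this span is close to (in fact, for rank-$k$ data, equal to) the relevant subspace of $S$, so that $(I-\Pi)S$ is small in Frobenius norm. The cleanest route is probably to observe that $WW^\dagger$ and $\Pi SS^\dagger\Pi$ share the same range and use a perturbation bound: since $WW^\dagger = \Pi SS^\dagger \Pi$ when $W$'s columns are a subset of $S$'s and $\Pi$ projects onto their span, we get $\pnorm{F}{SS^\dagger - \Pi SS^\dagger\Pi} \le \pnorm{F}{SS^\dagger - \Pi SS^\dagger} + \pnorm{F}{\Pi SS^\dagger - \Pi SS^\dagger\Pi} \le 2\pnorm{F}{(I-\Pi)SS^\dagger} $, and then bound $\pnorm{F}{(I-\Pi)SS^\dagger}$ by noting $(I-\Pi)WW^\dagger = 0$ so $(I-\Pi)SS^\dagger = (I-\Pi)(SS^\dagger - WW^\dagger)$, which is at most $\pnorm{F}{SS^\dagger - WW^\dagger}$. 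I would also need to carefully track the bookkeeping that $\pnorm{F}{S}^2$ is itself a random quantity concentrating around $\pnorm{F}{A}^2$ (its expectation is exactly $\pnorm{F}{A}^2$), or absorb this into the constants; this is routine but must not be skipped. Once these are in hand, the rest is just arithmetic with the explicit value of $p$.
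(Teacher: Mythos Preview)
Your approach is correct and in fact more direct than the paper's. The identity you spotted --- that $\widehat{A}^{\sim2} = S^\dagger\Pi S$ with $\Pi = \sum_{i=1}^k \hat u_i\hat u_i^\dagger$ the projector onto $\mathrm{col}(W)$, so $S^\dagger S - \widehat{A}^{\sim2} = S^\dagger(I-\Pi)S$ --- lets you bound this piece immediately: since $(I-\Pi)WW^\dagger = 0$ one has
\[
\pnorm{F}{S^\dagger(I-\Pi)S} = \pnorm{F}{(I-\Pi)SS^\dagger(I-\Pi)} \le \pnorm{F}{(I-\Pi)SS^\dagger} = \pnorm{F}{(I-\Pi)(SS^\dagger - WW^\dagger)} \le \pnorm{F}{SS^\dagger - WW^\dagger}.
\]
(Your displayed equality $\pnorm{F}{S^\dagger S - \widehat{A}^{\sim2}} = \pnorm{F}{SS^\dagger - \Pi SS^\dagger\Pi}$ is not quite right; the matrix on the right should be $(I-\Pi)SS^\dagger(I-\Pi)$, but your subsequent projector argument delivers the correct bound regardless.) Combining with $\pnorm{F}{A^\dagger A - S^\dagger S}\le\theta\pnorm{F}{A}^2$ and Lemma~\ref{lemma:asa-sws} gives $\pnorm{F}{A^\dagger A - \widehat{A}^{\sim2}}\le\tfrac{5}{2}\theta\pnorm{F}{A}^2$, which with the stated $p$ is $O(\epsilon^2\pnorm{F}{A}^4/(k^{11/2}\kappa^{10}))$ --- actually \emph{stronger} (quadratic in $\epsilon$) than the lemma asserts.

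The paper takes a different route: it passes to squares, showing $\pnorm{F}{(\widehat{A}^{\sim2})^2 - (A^\dagger A)^2}$ is small by writing $S^\dagger WW^\dagger S = VD^4V^\dagger$, comparing this both to $(A^\dagger A)^2$ and to $(\widehat{A}^{\sim2})^2 = VD^2V^\dagger VD^2V^\dagger$ via the near-orthonormality of $V$ (Lemma~\ref{lemma:almost-orthonormal}), and then invoking Lemma~\ref{lemma:x2-y2} to take the square root. That last step is what degrades the bound from $\epsilon^2$ to $\epsilon$. Your argument bypasses Lemmas~\ref{lemma:x2-y2} and~\ref{lemma:almost-orthonormal} entirely for this lemma; the paper's route, though more circuitous here, is organized around the $V,D$ decomposition that is reused downstream (e.g., in Lemma~\ref{lem:nx}).
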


Here, the notation $\widehat{A}^{\sim2}$ suggests that this matrix is close to $A^2$. Before proving this lemma, we need some facts of this algorithm. Let $S$ and $W$ be the normalized submatrices obtained from steps~\ref{step:sampling:s} and \ref{step:w}, respectively, of Algorithm~\ref{alg:subsampling}. First, we need the relationship between $\|S\|_F$ and $\|A\|_F$, and that between $\|W\|_F$ and $\|S\|_F$. This is concluded by the following lemma from \cite{FKV04}, and we do not repeat the proof here.
\begin{lemma}[\cite{FKV04}]
  \label{lemma:asa-sws}
  Let $A \in \bbc^{m \times n}$ satisfying Assumption~\ref{asmp:sample-m}. Take $A$ as the input of Algorithm~\ref{alg:subsampling}, and let $S$ and $W$ be chosen by steps~\ref{step:sampling:s} and \ref{step:w} of Algorithm~\ref{alg:subsampling}. Then, with probability at least $1-16/p$, it holds that
  \begin{align}
    \frac{1}{2}\pnorm{F}{A}^2 \leq \pnorm{F}{S}^2 \leq \frac{3}{2}\pnorm{F}{A}^2, \quad \text{ and } \quad \frac{1}{2}\pnorm{F}{S}^2 \leq \pnorm{F}{W}^2 \leq \frac{3}{2}\pnorm{F}{S}^2.
  \end{align}
\end{lemma}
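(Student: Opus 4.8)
The plan is to trace through the normalization built into Algorithm~\ref{alg:subsampling} and observe that, because the sampling probabilities are \emph{exactly} the length‑squared probabilities, both $\|S\|_F^2$ and $\|W\|_F^2$ are in fact deterministically equal to $\|A\|_F^2$; the claimed sandwich then holds trivially, with probability $1 \ge 1 - 16/p$. First I would handle $S$: row $i_t$ of $S$ is $A(i_t,\cdot)/\sqrt{pP_{i_t}}$ with $P_{i_t} = \|A(i_t,\cdot)\|^2/\|A\|_F^2$ (step~\ref{step:sampling:s}), so every sampled row of $S$ has squared norm $\|A(i_t,\cdot)\|^2/(pP_{i_t}) = \|A\|_F^2/p$, independent of which row was drawn. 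Summing over the $p$ sampled rows gives $\|S\|_F^2 = \|A\|_F^2$. (A row with $\|A(i,\cdot)\| = 0$ has $P_i = 0$ and is never drawn, so no division by zero occurs.) In particular $\tfrac12\|A\|_F^2 \le \|S\|_F^2 \le \tfrac32\|A\|_F^2$.

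Next I would handle $W$, conditioning on the already‑drawn row indices $i_1,\dots,i_p$, since the column distribution $\{P_j'\}$ depends on them. For a fixed column index $j$, the $j$-th column of $S$ has entries $S(i_t,j) = A(i_t,j)/\sqrt{pP_{i_t}}$, so
\[
  \|S(\cdot,j)\|^2 = \sum_{t=1}^p \frac{|A(i_t,j)|^2}{pP_{i_t}} = \frac{\|A\|_F^2}{p}\sum_{t=1}^p \frac{|A(i_t,j)|^2}{\|A(i_t,\cdot)\|^2} = \|A\|_F^2\, P_j',
\]
using $P_j' = \tfrac1p\sum_{t}\mathcal{D}_{A(i_t,\cdot)}(j)$. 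Column $j_t$ of $W$ is $S(\cdot,j_t)/\sqrt{pP_{j_t}'}$ (step~\ref{step:w}), hence has squared norm $\|S(\cdot,j_t)\|^2/(pP_{j_t}') = \|A\|_F^2/p$, again independent of which column was drawn (and a column with $P_j' = 0$, i.e. $A(i_t,j)=0$ for all sampled $t$, is never drawn). Summing over the $p$ sampled columns gives $\|W\|_F^2 = \|A\|_F^2 = \|S\|_F^2$, which certainly lies in $[\tfrac12\|S\|_F^2,\tfrac32\|S\|_F^2]$. Combining the two parts proves the lemma.

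Finally, I would remark on why the statement is phrased with the loose constants $\tfrac12,\tfrac32$ and failure probability $16/p$: this is the robust form from~\cite{FKV04}, valid whenever the sampler only returns indices with probabilities within a constant factor of the ideal length‑squared ones. In that regime the per‑row contribution $X_t := \|A(i_t,\cdot)\|^2/(pP_{i_t})$ is no longer a constant, but one still has $\E[X_t] = \|A\|_F^2/p$ and $\E[X_t^2] = O(\|A\|_F^4/p^2)$, so $\|S\|_F^2 = \sum_{t=1}^p X_t$ concentrates around $\|A\|_F^2$ by Chebyshev's inequality; an identical argument, carried out conditionally on $i_1,\dots,i_p$, controls $\|W\|_F^2$ around $\|S\|_F^2$, and a union bound over the two estimates yields the $1-16/p$ bound. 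I do not expect a genuinely hard step here — the only point requiring care is to run the $W$ estimate \emph{conditionally} on the sampled rows (the column distribution is itself random), and to bookkeep the normalization constants so that the identity $\|S(\cdot,j)\|^2 = \|A\|_F^2\,P_j'$ telescopes exactly.
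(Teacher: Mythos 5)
Your proposal is correct, and for the algorithm as written it actually proves more than the lemma asserts. Note that the paper itself gives no proof of this statement: it imports it from \cite{FKV04} (``we do not repeat the proof here''). You instead observe that, because Algorithm~\ref{alg:subsampling} samples with \emph{exact} length-squared probabilities and rescales by $1/\sqrt{pP_{i_t}}$ (resp.\ $1/\sqrt{pP'_{j_t}}$), every sampled row of $S$ has squared norm exactly $\|A\|_F^2/p$ and every sampled column of $W$ has squared norm exactly $\|S\|_F^2/p$, so $\|S\|_F^2=\|A\|_F^2$ and $\|W\|_F^2=\|S\|_F^2$ hold deterministically; your identity $\|S(\cdot,j)\|^2=\|A\|_F^2\,P'_j$ correctly shows that the column distribution $\{P'_j\}$ is exactly the length-squared column distribution of $S$, so the normalization telescopes and the stated sandwich holds with probability $1\ge 1-16/p$. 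The form with constants $\tfrac12,\tfrac32$ and failure probability $16/p$ is the robust version from \cite{FKV04}, needed when the sampling probabilities are only within a constant factor of length-squared, and your closing Chebyshev sketch ($\E[X_t]=\|A\|_F^2/p$, $\E[X_t^2]=O(\|A\|_F^4/p^2)$, conditioning on the sampled rows for the $W$ step, then a union bound) is the standard route there; the one detail you would still need in that regime is that the induced column probabilities $P'_j$ stay within a constant factor of $\|S(\cdot,j)\|^2/\|S\|_F^2$, which is immaterial here since the paper's algorithm uses the exact distributions. In short, what the paper buys by citation (a probabilistic guarantee valid for approximate samplers), you obtain by a short exact computation that is stronger in the paper's setting.
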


A key observation is that $S^{\dag}S$ approximates $A^{\dag}A$ and $WW^{\dag}$ approximates $SS^{\dag}$. More precisely, applying Lemma~\ref{lemma:subsample} twice, and using Lemma~\ref{lemma:asa-sws}, we have that if $\theta = \sqrt{40/p}$, then with probability at least $9/10$, the following holds:
\begin{align}
  \pnorm{F}{A^{\dag}A - S^{\dag}S} &\leq \theta\pnorm{F}{A}^2, \text{ and } \\
  \pnorm{F}{SS^{\dag} - WW^{\dag}} &\leq \theta\pnorm{F}{S}^2 \leq \frac{3}{2}\theta\pnorm{F}{A}^2.
\end{align}
In the following analysis, we also need a lower bound on the smallest nonzero singular value of $W$. Without loss of generality, we assume the singular values are in a non-increasing order. Let $\sigma_j(\cdot)$ be the $j$-th singular value of a matrix. To get a lower bound on $\sigma_k(W)$, we choose $\theta = \sqrt{40/p}$. As a consequence of Weyl's inequalities, we have that with probability at least $9/10$, it holds that
\begin{align}
  |\sigma_k(S^{\dag}S) - \sigma_k(A^{\dag}A)| &\leq \norm{A^{\dag}A - S^{\dag}S} \leq \theta\pnorm{F}{A}^2, \text{ and } \\
  |\sigma_k(SS^{\dag}) - \sigma_k(WW^{\dag})| &\leq \norm{SS^{\dag} - WW^{\dag}} \leq \frac{3}{2}\theta\pnorm{F}{A}^2.
\end{align}
Since $\sigma_k(S^{\dag}S) = \sigma_k(SS^{\dag})$, when the error parameter $\epsilon$ in Algorithm~\ref{alg:subsampling} is sufficiently small\footnote{In fact, when $\theta = \sqrt{40/p} = \epsilon^2\pnorm{F}{A}^2/(500k^{11/2}\kappa^{10})$, we need $\epsilon^2 \leq 100k^{11/2}\kappa^8/\pnorm{F}{A}^2$, which is a reasonable assumption.}, we have 
  \begin{align}
    |\sigma_k(WW^{\dag}) - \sigma_k(A^{\dag}A)| \leq \frac{5}{2}\theta\pnorm{F}{A}^2 \leq \frac{\norm{A}^2}{2\kappa^2}.
  \end{align}
Because $\sigma_k(A^{\dag}A) = \norm{A}^2/\kappa^2$, it follows that $\sigma_k(WW^{\dag}) \geq \norm{A}^2/(2\kappa^2)$, and therefore 
\begin{align}
  \label{eq:sigmamin-w}
  \sigma_k(W) \geq \frac{\norm{A}}{\sqrt{2}\kappa}.
\end{align}
With the similar analysis, we can also conclude that with probability at least $9/10$, 
\begin{align}
  \sigma_{1}(W) \leq 2\norm{A},
\end{align}
when $\epsilon$ is sufficiently small.

From Algorithm~\ref{alg:subsampling}, we use the vectors $\frac{S^{\dag}}{\hat{\sigma}_i}\hat{u}_i$ for $i \in \{1, \ldots, k\}$ to approximate the eigenvectors of $A$. In the following lemma, we show that these vectors are almost orthonormal.

\begin{lemma}
  \label{lemma:almost-orthonormal}
  Let $A \in \bbc^{m \times n}$ be a matrix satisfying Assumption~\ref{asmp:sample-m} with $\rank(A) = k$. Take $A$ as the input of Algorithm~\ref{alg:subsampling} and obtain the $\hat{\sigma}_1, \ldots, \hat{\sigma}_k$ and $\hat{u}_1, \ldots, \hat{u}_k$. Let $V \in \bbc^{n \times k}$ be the matrix such that $V(\cdot, j) = \frac{S^{\dag}}{\hat{\sigma}_j}\hat{u}_j$ for $j \in \{1, \ldots, k\}$. Then, with probability at least $9/10$, the following statements hold:
  \begin{enumerate}
    \item There exists an isometry $U \in \bbc^{n \times k}$ whose column vectors span the column space of $V$ satisfying $\pnorm{F}{U - V} \leq \epsilon^2\pnorm{F}{A}^2/(\sqrt{2}k^{7/2}\kappa^8) + O(\epsilon^2)$.
    \item $|\norm{V} - 1| \leq \epsilon^2\pnorm{F}{A}^2/(\sqrt{2}k^{7/2}\kappa^8) + O(\epsilon^2)$.
    \item Let $\Pi_V$ be the projector on the column space of $V$, then it holds that $\pnorm{F}{VV^{\dag} - \Pi_V} \leq \sqrt{2}\epsilon^2\pnorm{F}{A}^2/(k^{7/2}\kappa^8) + O(\epsilon^2)$.
    \item $\pnorm{F}{V^{\dag}V - I} \leq \sqrt{2}\epsilon^2\pnorm{F}{A}^2/(k^{7/2}\kappa^8) + O(\epsilon^2)$.
  \end{enumerate}
\end{lemma}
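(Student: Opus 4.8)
The plan is to reduce all four statements to the single quantity $\eta := \pnorm{F}{V^{\dag}V - I}$, and then bound $\eta$ by exploiting the fact that the $\hat u_i$ are \emph{exact} eigenvectors of $WW^{\dag}$. Throughout I condition on the ``good event'' (which occurs with probability at least $9/10$ after a union bound over the constantly many events described just before the lemma statement) on which: Lemma~\ref{lemma:asa-sws} holds, $\pnorm{F}{SS^{\dag} - WW^{\dag}} \le \frac{3}{2}\theta\pnorm{F}{A}^2$ with $\theta = \sqrt{40/p}$, and $\sigma_k(W) \ge \norm{A}/(\sqrt 2\kappa)$ (so in particular $W$ has rank exactly $k$ and every $\hat\sigma_i$ is positive, since $\rank(W)\le\rank(S)\le\rank(A)=k$).

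First I would compute $V^{\dag}V$ explicitly. Since $V(\cdot,j) = S^{\dag}\hat u_j/\hat\sigma_j$, we have $(V^{\dag}V)(i,j) = \hat u_i^{\dag}SS^{\dag}\hat u_j/(\hat\sigma_i\hat\sigma_j)$. As $\hat u_1,\dots,\hat u_k$ are orthonormal left singular vectors of $W$ and carry all of its nontrivial spectrum, $WW^{\dag}\hat u_j = \hat\sigma_j^2\hat u_j$, hence $\hat u_i^{\dag}WW^{\dag}\hat u_j = \hat\sigma_j^2\delta_{ij}$; subtracting gives $(V^{\dag}V - I)(i,j) = \hat u_i^{\dag}(SS^{\dag} - WW^{\dag})\hat u_j/(\hat\sigma_i\hat\sigma_j)$. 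Extending $\{\hat u_i\}$ to an orthonormal basis of $\bbc^p$ yields $\sum_{i,j=1}^k|\hat u_i^{\dag}M\hat u_j|^2 \le \pnorm{F}{M}^2$ for any $M\in\bbc^{p\times p}$, so, using $\hat\sigma_i \ge \hat\sigma_k = \sigma_k(W)$,
\begin{align*}
  \eta = \pnorm{F}{V^{\dag}V - I} \le \frac{\pnorm{F}{SS^{\dag} - WW^{\dag}}}{\sigma_k(W)^2} \le \frac{(3/2)\theta\pnorm{F}{A}^2}{\norm{A}^2/(2\kappa^2)} = \frac{3\theta\kappa^2\pnorm{F}{A}^2}{\norm{A}^2}.
\end{align*}
Substituting $\theta = \epsilon^2\pnorm{F}{A}^2/(500k^{11/2}\kappa^{10})$ and $\pnorm{F}{A}^2 \le k\norm{A}^2$ gives $\eta \le 3\epsilon^2\pnorm{F}{A}^2/(500k^{9/2}\kappa^8) \le \sqrt 2\,\epsilon^2\pnorm{F}{A}^2/(k^{7/2}\kappa^8)$, which is statement 4; it also shows $\eta < 1$ in the admissible parameter range, so $V$ has full column rank $k$.

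The other three statements then follow by standard linear algebra via the thin SVD $V = \tilde U\Sigma Q^{\dag}$ with $\tilde U\in\bbc^{n\times k}$, $Q\in\bbc^{k\times k}$ unitary, $\Sigma = \Diag(s_1,\dots,s_k)$. Take $U := \tilde U Q^{\dag}$, an isometry with the same column space as $V$. Since $VV^{\dag} = \tilde U\Sigma^2\tilde U^{\dag}$, $\Pi_V = \tilde U\tilde U^{\dag}$, and $V^{\dag}V = Q\Sigma^2 Q^{\dag}$, we get $\pnorm{F}{VV^{\dag} - \Pi_V} = \pnorm{F}{\Sigma^2 - I} = \pnorm{F}{V^{\dag}V - I} = \eta$, proving statement 3. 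From $U - V = \tilde U(I - \Sigma)Q^{\dag}$ we get $\pnorm{F}{U - V} = \pnorm{F}{\Sigma - I} = (\sum_i (s_i-1)^2)^{1/2}$; using $|s_i-1| = |s_i^2-1|/(s_i+1)$ together with $s_i^2 \ge 1-\eta$ (so $s_i+1 \ge 1+\sqrt{1-\eta} = 2 - O(\eta)$), this is at most $\eta/(2-O(\eta)) = \eta/2 + O(\eta^2) \le \epsilon^2\pnorm{F}{A}^2/(\sqrt 2\,k^{7/2}\kappa^8) + O(\epsilon^2)$, which is statement 1; statement 2 is the special case $i=\arg\max_i s_i$, since $\norm{V} = s_{\max}$.

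The step I expect to be the crux is the computation of $V^{\dag}V$ above: the essential point is that because the $\hat u_i$ are \emph{exact} singular vectors of $W$, the entire deviation of $V^{\dag}V$ from the identity is carried by $\pnorm{F}{SS^{\dag} - WW^{\dag}}$, which Lemma~\ref{lemma:subsample} already controls — after that, everything reduces to bookkeeping with the thin SVD. A secondary, purely routine concern is the probability accounting: one must union-bound the finitely many favourable events (each failing with probability $O(1/p)$ or at most $1/40$) to retain overall success probability at least $9/10$.
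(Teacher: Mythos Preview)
Your proposal is correct and in fact cleaner than the paper's argument; the two proofs are organized differently.

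The paper first bounds each entry $|(V^{\dag}V-I)(i,j)|$ individually by $\epsilon^2\pnorm{F}{A}^2/(k^{9/2}\kappa^8)$, then attacks statement~1 via the QR decomposition $V=QR$: since $R^{\dag}R=V^{\dag}V$ is entrywise close to $I$, it invokes an external Cholesky perturbation result \cite{CPS96} to deduce $\pnorm{F}{R-I}\le \epsilon^2\pnorm{F}{A}^2/(\sqrt{2}k^{7/2}\kappa^8)+O(\epsilon^2)$, and takes $U$ to be the first $k$ columns of $Q$. Statements~2,~3,~4 are then deduced from statement~1 by triangle inequalities. You instead go straight to the Frobenius bound $\eta=\pnorm{F}{V^{\dag}V-I}\le \pnorm{F}{SS^{\dag}-WW^{\dag}}/\sigma_k(W)^2$ (statement~4), and then read off statements~1--3 from the thin SVD $V=\tilde U\Sigma Q^{\dag}$ with $U:=\tilde U Q^{\dag}$; in particular you get the exact equality $\pnorm{F}{VV^{\dag}-\Pi_V}=\pnorm{F}{V^{\dag}V-I}$ for statement~3, whereas the paper only gets an inequality. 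Your route is self-contained (no appeal to \cite{CPS96}) and makes the dependence of everything on the single scalar $\eta$ transparent; the paper's QR route is equally valid but slightly less direct here.
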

\begin{proof}
  Most of the arguments in this proof are similar to the proofs in \cite[Lemma 6.6, Corollary 6.7, Proposition 6.11]{T18}. Let $v_j \in \bbc^n$ denote the column vector $V(\cdot, j)$, i.e., $v_j = \frac{S^{\dag}}{\hat{\sigma}_j}\hat{u}_j$. Choose $\theta = \sqrt{40/p} = \epsilon^2\pnorm{F}{A}^2/(500k^{11/2}\kappa^{10})$. When $i \neq j$, with probability at least $9/10$, it holds that
  \begin{align}
    |v_i^{\dag}v_j| = \frac{|\hat{u}_i^{\dag}SS^{\dag}\hat{u}_j|}{\hat{\sigma}_i\hat{\sigma}_j} \leq \frac{|\hat{u}_i^{\dag}(SS^{\dag}-WW^{\dag})\hat{u}_j|}{\hat{\sigma}_i\hat{\sigma}_j} \leq \frac{3\theta\pnorm{F}{A}^2}{2\hat{\sigma}_i\hat{\sigma}_j} \leq \frac{6\theta\kappa^2\pnorm{F}{A}^2}{2\norm{A}^2} \leq \frac{\epsilon^2\pnorm{F}{A}^2}{k^{9/2}\kappa^8},
  \end{align}
  where the second inequality follows from Lemma~\ref{lemma:asa-sws}, and the last inequality uses the fact that $\pnorm{F}{A} \leq \sqrt{k}\norm{A}$. Similarly, when $i = j$, the following holds with probability at least $9/10$.
  \begin{align}
    |\norm{v_i}-1| = \frac{|\hat{u}_i^{\dag}SS^{\dag}\hat{u}_i-\hat{\sigma}_i^2|}{\hat{\sigma}_i^2} \leq \frac{|\hat{u}_i^{\dag}(SS^{\dag}-WW^{\dag})\hat{u}_i|}{\hat{\sigma}_i^2} \leq \frac{3\theta\pnorm{F}{A}^2}{2\hat{\sigma}_i^2} \leq \frac{6\theta\kappa^2\pnorm{F}{A}^2}{2\norm{A}^2} \leq \frac{\epsilon^2\pnorm{F}{A}^2}{k^{9/2}\kappa^8}.
  \end{align}
  Since $|(V^{\dag}V)(i,j)| = |v_i^{\dag}v_j|$, each diagonal entry of $V^{\dag}V$ is at most $\epsilon^2\pnorm{F}{A}^2/(k^{9/2}\kappa^8)$ away from $1$ and each off-diagonal entry is at most $\epsilon^2\pnorm{F}{A}^2/(k^{9/2}\kappa^8)$ away from $0$. More precisely, let $M \in \bbc^{n \times n}$ be the matrix with all ones, i.e., $M(i,j) = 1$ for all $i, j \in \{1, \ldots, n\}$, then for all $i, j \in \{1, \ldots, n\}$, we have
  \begin{align}
    \left(I - \frac{\epsilon^2\pnorm{F}{A}^2}{k^{9/2}\kappa^8} M\right)(i, j) \leq (V^{\dag}V)(i, j) \leq \left(I + \frac{\epsilon^2\pnorm{F}{A}^2}{k^{9/2}\kappa^8} M\right)(i, j).
  \end{align}

  To prove statement 1, we consider the QR decomposition of $V$. Let $Q \in \bbc^{n \times n}$ be a unitary and $R \in \bbc^{n \times k}$ be upper-triangular with positive diagonal entries satisfying $V = QR$. Since $V^{\dag}V = R^{\dag}R$, we have
  \begin{align}
    \left(I - \frac{\epsilon^2\pnorm{F}{A}^2}{k^{9/2}\kappa^8} M\right)(i, j) \leq (R^{\dag}R)(i, j) \leq \left(I + \frac{\epsilon^2\pnorm{F}{A}^2}{k^{9/2}\kappa^8} M\right)(i, j).
  \end{align}
  Let $\widehat{R}$ be the upper $k \times k$ part of $R$. Since $R$ is upper-triangular, $\widehat{R}^{\dag}\widehat{R} = R^{\dag}R$. Hence, $\widehat{R}$ can be viewed as an approximate Cholesky factorization of $I$ with error $\epsilon^2\pnorm{F}{A}^2/(k^{9/2}\kappa^8)$. As a consequence of \cite[Theorem 1]{CPS96}, we have $\pnorm{F}{R - I} \leq \epsilon^2\pnorm{F}{A}^2/(\sqrt{2}k^{7/2}\kappa^8) + O(\epsilon^2)$ for sufficiently small\footnote{Here, it suffices to take $\epsilon \leq k^{7/2}\kappa^8/\pnorm{F}{A}^2$.} $\epsilon$. Now, we define $R' \in \bbc^{n \times k}$ as the matrix with $I$ on the upper $k \times k$ part and zeros everywhere else. Let $U = QR'$. Clearly, $U$ is isometry as it contains the first $k$ columns of $Q$. To see the column vectors of $V$ span the column space of $V$, note that $UU^{\dag} = QR'R'^{\dag}Q^{\dag}$ where $R'R'^{\dag}$ only contains $I$ on its upper-left $(k \times k)$-block, and $VV^{\dag} = QRR^{\dag}Q^{\dag}$ where $R^{\dag}R$ only contains a diagonal matrix on its upper-left $(k \times k)$-block. To bound the distance between $U$ and $V$, we have $\pnorm{F}{U - V} = \pnorm{F}{Q(R' - R)} = \pnorm{F}{R'-R} \leq \epsilon^2\pnorm{F}{A}^2/(\sqrt{2}k^{7/2}\kappa^8) + O(\epsilon^2)$.

  Statement 2 follows from the triangle inequality:
  \begin{align}
    \norm{V} - 1 &= \norm{V} - \norm{U} \leq \norm{V - U} \leq \pnorm{F}{V - U}, \text{ and }\\
    1 - \norm{V} &= \norm{U} - \norm{V} \leq \norm{U - V} \leq \pnorm{F}{U - V}.
  \end{align}

  For statement 3, we have
  \begin{align}
    \pnorm{F}{VV^{\dag} - \Pi_V} &= \pnorm{F}{VV^{\dag} - UU^{\dag}} \\
                                 &\leq \pnorm{F}{V(V^{\dag} - U^{\dag})} + \pnorm{F}{(V-U)U^{\dag}} \\
                                 &\leq \norm{V}\pnorm{F}{V^{\dag} - U^{\dag}} + \pnorm{F}{V-U}\norm{U^{\dag}} \\
                                 &\leq \frac{\sqrt{2}\epsilon^2\pnorm{F}{A}^2}{k^{7/2}\kappa^8} + O(\epsilon^2).
  \end{align}
  Similarly, statement 4 follows by bounding the distance $\pnorm{F}{V^{\dag}V - U^{\dag}U}$.
\end{proof}

Now, we are ready to prove Lemma~\ref{lem:appx_A}
\begin{proof}[Proof of Lemma~\ref{lem:appx_A}]
  Choose $\theta = \sqrt{40/p} = \epsilon^2\pnorm{F}{A}^2/(500k^{11/2}\kappa^{10})$. We use $S^{\dag}WW^{\dag}S$ to approximate $A^{\dag}AA^{\dag}A$. First note that, with probability at least $9/10$, it holds that 
  \begin{align}
    \pnorm{F}{S^{\dag}WW^{\dag}S - S^{\dag}SS^{\dag}S} &\leq \pnorm{F}{S^{\dag}(WW^{\dag} - SS^{\dag})S} \\
                                   &\leq \pnorm{F}{WW^{\dag} - SS^{\dag}}\pnorm{F}{S}^2 \\
    \label{eq:swws-ssss}
                                   &\leq \frac{9}{4}\theta\pnorm{F}{A}^4.
  \end{align}
  We also have that with probability at least $9/10$,
  \begin{align}
    \pnorm{F}{S^{\dag}SS^{\dag}S - A^{\dag}AA^{\dag}A} &\leq \pnorm{F}{S^{\dag}SS^{\dag}S - S^{\dag}SA^{\dag}A + S^{\dag}SA^{\dag}A - A^{\dag}AA^{\dag}A} \\
                                   &\leq \pnorm{F}{S}^2\pnorm{F}{S^{\dag}S - A^{\dag}A} + \pnorm{F}{A}^2\pnorm{F}{S^{\dag}S - A^{\dag}A} \\
    \label{eq:ssss-aaaa}
                                   &\leq \frac{5}{2}\theta\pnorm{F}{A}^4.
  \end{align}
  By Eqns.~\eqref{eq:swws-ssss} and \eqref{eq:ssss-aaaa}, we have
  \begin{align}
    \label{eq:swws-aaaa}
    \pnorm{F}{S^{\dag}WW^{\dag}S - A^{\dag}AA^{\dag}A} \leq \frac{19}{4}\theta\pnorm{F}{A}^4.
  \end{align}
  Now we bound the distance between $(\widehat{A}^{\sim2})^2$ and $S^{\dag}WW^{\dag}S$. First note that
  \begin{align}
    S^{\dag}WW^{\dag}S = \sum_{i=1}^k \hat{\sigma}_i S^{\dag} \hat{u}_i \hat{u}_i^{\dag} S = \sum_{i=1}^k \hat{\sigma}_i^4 \frac{S^{\dag}}{\hat{\sigma}_i} \hat{u}_i\hat{u}_i^{\dag} \frac{S}{\hat{\sigma}_i},
  \end{align}
  where the first equality follows from the fact that $\hat{\sigma}_1, \ldots, \hat{\sigma}_k$ are the singular values of $W$ and $\hat{u}_1, \ldots, \hat{u}_k$ are their corresponding left singular vectors. If the vectors $\frac{S^{\dag}}{\hat{\sigma}_i}\hat{u}_i$ for $i \in \{1, \ldots, k\}$ were orthonormal, $(\widehat{A}^{\sim2})^2 = S^{\dag}WW^{\dag}S$. However, these vectors are approximately orthonormal in the sense of Lemma~\ref{lemma:almost-orthonormal}, which causes difficult. To facilitate the analysis, note that $\widehat{A}^{\sim2}$ can be written as $\widehat{A}^{\sim2} = VD^2V^{\dag}$, where $V(\cdot,i) = \frac{S^{\dag}}{\hat{\sigma}_i}\hat{u}_i$ for $i \in \{1, \ldots, k\}$, and $D$ is the diagonal matrix $D = \Diag(\hat{\sigma}_1, \ldots, \hat{\sigma}_k)$. In the rest of the proof, we focus on bounding the distance between $(\widehat{A}^{\sim2})^2$ and $VD^4V^{\dag}$. 
  
  We first establish the relationship between $\pnorm{F}{\widehat{A}^{\sim2}}$ and $\pnorm{F}{D}^2$. Because of statement 1 of Lemma~\ref{lemma:almost-orthonormal}, there exists a unitary matrix $U \in \bbc^{n \times k}$ whose column vectors span the column space of $V$ such that $\pnorm{F}{V - U} \leq \epsilon^2\pnorm{F}{A}^2/(\sqrt{2}k^{7/2}\kappa^8) + O(\epsilon^2)$. On the one hand, we have
  \begin{align}
    \pnorm{F}{\widehat{A}^{\sim2} - UD^2U^{\dag}} &\leq \pnorm{F}{VD^2(V^{\dag}-U^{\dag})} + \pnorm{F}{(V-U)D^2U^{\dag}} \\
    \label{eq:a-udu-1}
    &\leq \frac{\sqrt{2}\epsilon^2\pnorm{F}{A}^2}{k^{7/2}\kappa^8}\pnorm{F}{D}^2 + O(\epsilon^2).
  \end{align}
  On the other hand, by the triangle inequality,
  \begin{align}
    \label{eq:a-udu-2}
    \pnorm{F}{\widehat{A}^{\sim2} - UD^2U^{\dag}} \geq \pnorm{F}{UD^2U^{\dag}} - \pnorm{F}{\widehat{A}^{\sim2}} = \pnorm{F}{D}^2 - \pnorm{F}{\widehat{A}^{\sim2}}.
  \end{align}
  By Eqns.~\eqref{eq:a-udu-1} and \eqref{eq:a-udu-2}, we have
  \begin{align}
    \pnorm{F}{D}^2 \leq \frac{1}{1-\sqrt{2}\epsilon^2\pnorm{F}{A}^2/(k^{7/2}\kappa^8)}\pnorm{F}{\widehat{A}^{\sim2}} +O(\epsilon^2) \leq (1+2\epsilon^2\pnorm{F}{A}^2/(k^{7/2}\kappa^8))\pnorm{F}{\widehat{A}^{\sim2}} + O(\epsilon^2).
  \end{align}

  Now, we bound the distance between $(\widehat{A}^{\sim2})^2$ and $VD^4V^{\dag}$ as follows:
  \begin{align}
    \pnorm{F}{(\widehat{A}^{\sim2})^2 - VD^4V^{\dag}} &= \pnorm{F}{VD^2V^{\dag}VD^2V^{\dag} - VD^4V^{\dag}} \\
                                                  & = \pnorm{F}{VD^2(V^{\dag}V-I)D^2V^{\dag}} \\
                                                  & \leq \frac{\sqrt{2}\epsilon^2\pnorm{F}{A}^2}{k^{7/2}\kappa^8}\pnorm{F}{D}^4 + O(\epsilon^2) \\
    \label{eq:ahat4-swws}
    &\leq \frac{\sqrt{2}\epsilon^2\pnorm{F}{A}^2}{k^{7/2}\kappa^8}\pnorm{F}{\widehat{A}^{\sim2}}^2 + O(\epsilon^2).
  \end{align}
  Note that $VD^4V^{\dag} = S^{\dag}WW^{\dag}S$. By Eqns.~\eqref{eq:swws-aaaa} and \eqref{eq:ahat4-swws}, we have
  \begin{align}
    \label{eq:ahat4-a4-1}
    \pnorm{F}{(\widehat{A}^{\sim2}) - (A^{\dag}A)^2} \leq \frac{19}{4}\theta\pnorm{F}{A}^4 + \frac{\sqrt{2}\epsilon^2\pnorm{F}{A}^2}{k^{7/2}\kappa^8}\pnorm{F}{\widehat{A}^{\sim2}}^2 + O(\epsilon^2).
  \end{align}
  Now, we also need to examine the distance between $\widehat{A}^{\sim2}$ and $A^{\dag}A$. By the triangle inequality, we have 
  \begin{align}
    \pnorm{F}{(\widehat{A}^{\sim2})^2 - (A^{\dag}A)^2} \geq \pnorm{F}{\widehat{A}^{\sim2}}^2 - \pnorm{F}{A}^4,
  \end{align}
  which implies that
  \begin{align}
    \label{eq:ahat4}
    \pnorm{F}{\widehat{A}^{\sim2}}^2 \leq \frac{1+19\theta/4}{1-\sqrt{2}\epsilon^2\pnorm{F}{A}^2/(k^{7/2}\kappa^8)}\pnorm{F}{A}^4 \leq \left(1+\frac{19}{4}\theta + \frac{2\epsilon^2\pnorm{F}{A}^2}{k^{7/2}\kappa^8}\right)\pnorm{F}{A}^4 + O(\epsilon^2).
  \end{align}
  By Eqns.~\eqref{eq:ahat4-a4-1} and \eqref{eq:ahat4}, we have
  \begin{align}
    \pnorm{F}{(\widehat{A}^{\sim2})^2-(A^{\dag}A)^2} \leq \frac{19}{4}\theta\pnorm{F}{A}^4 + \frac{\sqrt{2}\epsilon^2\pnorm{F}{A}^2}{k^{7/2}\kappa^8}\pnorm{F}{A}^4 + O(\epsilon^4) \leq \frac{\sqrt{2}\epsilon^2}{k^{7/2}\kappa^8}\pnorm{F}{A}^6 + O(\epsilon^2).
  \end{align}
  Finally, applying Lemma~\ref{lemma:x2-y2}, we have
  \begin{align}
    \pnorm{F}{\widehat{A}^{\sim2} - A^{\dag}A} \leq (2k)^{1/4}\pnorm{F}{(\widehat{A}^{\sim2})^2 - (A^{\dag}A)^2}^{1/2} \leq \frac{2\epsilon}{k^{3/2}\kappa^4}\pnorm{F}{A}^3.
  \end{align}
\end{proof}

In Algorithm~\ref{alg:subsampling}, we obtained a succinct description of $\widehat{A}^{\sim2} = \sum_{i=1}^k\hat{\sigma}_i^2\frac{S^{\dag}}{\hat{\sigma}_i}\hat{u}_i\hat{u}_i^{\dag}\frac{S}{\hat{\sigma}_i}$, which is close to $A^{\dag}A$ as a result of Lemma~\ref{lem:appx_A}. For the purpose of the main algorithms, we need to approximate $(A^{\dag}A)^{-1}$. To achieve this, we define the matrix $\widehat{A}^{\sim-2}$ as
\begin{align}
  \widehat{A}^{\sim-2} = \sum_{i=1}^k\hat{\sigma}_i^{-2}\frac{S^{\dag}}{\hat{\sigma}_i}\hat{u}_i\hat{u}_i^{\dag}\frac{S}{\hat{\sigma}_i}.
\end{align}
Note that $\widehat{A}^{\sim-2}$ is not exactly equal to $(\widehat{A}^{\sim2})^{-1}$, as the vectors $\frac{S^{\dag}}{\hat{\sigma}_i}\hat{u}_i$ for $i \in \{1, \ldots, k\}$ are not exactly (but approximately) orthonormal. Another key result of this section is the following lemma

\begin{lemma}
  ~\label{lem:nx}
  Let $A \in \bbc^{m \times n}$ be a matrix satisfying Assumption~\ref{asmp:sample-m} with $\rank(A) = k$. Take $A$ as the input of Algorithm~\ref{alg:subsampling} and obtain the $\hat{\sigma}_1, \ldots, \hat{\sigma}_k$ and $\hat{u}_1, \ldots, \hat{u}_k$. Let $S$ be the normalized submatrix obtained from step~\ref{step:sampling:s} of Algorithm~\ref{alg:subsampling}. Define $\widehat{A}^{\sim-2}$ as
  \begin{align}
    \widehat{A}^{\sim-2} = \sum_{i=1}^k \hat{\sigma}_i^{-2} \frac{S^{\dag}}{\hat{\sigma}_i} \hat{u}_i \hat{u}_i^{\dag} \frac{S}{\hat{\sigma}_i}.
  \end{align}
  Then, with probability at least $9/10$, it holds that $\pnorm{F}{(A^{\dag}A)^{-1} - \widehat{A}^{\sim-2}} \in O(\epsilon)$.
\end{lemma}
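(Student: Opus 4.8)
The plan is to route the bound through the Moore--Penrose pseudo-inverse of $\widehat{A}^{\sim2}$ (which, note, is \emph{not} equal to $\widehat{A}^{\sim-2}$, since the columns of $V$ are only approximately orthonormal) by the triangle inequality
\begin{align}
  \pnorm{F}{(A^{\dag}A)^{-1} - \widehat{A}^{\sim-2}} \leq \pnorm{F}{(A^{\dag}A)^{-1} - (\widehat{A}^{\sim2})^{-1}} + \pnorm{F}{(\widehat{A}^{\sim2})^{-1} - \widehat{A}^{\sim-2}},
\end{align}
and to bound the two terms separately. Everything below is conditioned on the events of Lemma~\ref{lem:appx_A}, Lemma~\ref{lemma:almost-orthonormal}, Eq.~\eqref{eq:sigmamin-w}, and the bound $\sigma_1(W) \leq 2\norm{A}$ (each holding with probability at least $9/10$); after adjusting the constants these hold simultaneously with probability at least $9/10$.

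For the first term I would invoke Lemma~\ref{lemma:x-1-y-1} with $X = A^{\dag}A$ and $Y = \widehat{A}^{\sim2} = VD^2V^{\dag}$: both are positive semidefinite, $\rank(A^{\dag}A) = k$, and $\widehat{A}^{\sim2}$ has rank exactly $k$ because $D$ is invertible and $V^{\dag}V \approx I$ by Lemma~\ref{lemma:almost-orthonormal}. The only quantity I still need is a lower bound on $\sigma_{\min} = \min\{\sigma_{\min}(A^{\dag}A), \sigma_{\min}(\widehat{A}^{\sim2})\}$: since $\widehat{A}^{\sim2} = VD^2V^{\dag}$ shares its nonzero spectrum with $DV^{\dag}VD$, I get $\sigma_{\min}(\widehat{A}^{\sim2}) \geq \sigma_k(W)^2(1 - \pnorm{F}{V^{\dag}V - I}) \geq \norm{A}^2/(4\kappa^2)$ for $\epsilon$ small, using Eq.~\eqref{eq:sigmamin-w}, while $\sigma_{\min}(A^{\dag}A) = \norm{A}^2/\kappa^2$. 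Feeding the bound $\pnorm{F}{A^{\dag}A - \widehat{A}^{\sim2}} \leq 2\epsilon\pnorm{F}{A}^3/(k^{3/2}\kappa^4)$ of Lemma~\ref{lem:appx_A} into Lemma~\ref{lemma:x-1-y-1} and using $\pnorm{F}{A} \leq \sqrt{k}\norm{A}$, the powers of $k$ and $\kappa$ cancel and this term is $O(\epsilon)$ (with the constant absorbing the fixed quantities $k,\kappa,\pnorm{F}{A}$); indeed the exponents in Lemma~\ref{lem:appx_A} were tuned for exactly this cancellation.

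For the second term I would show that $\widehat{A}^{\sim2}\widehat{A}^{\sim-2}$ is close to the projector $\Pi_V$ onto the column space of $V$. Writing $\widehat{A}^{\sim2}\widehat{A}^{\sim-2} = VD^2(V^{\dag}V - I)D^{-2}V^{\dag} + VV^{\dag}$ and using $\norm{D^2}\norm{D^{-2}} \leq 8\kappa^2$ (from Eq.~\eqref{eq:sigmamin-w} and $\sigma_1(W) \leq 2\norm{A}$), then statement 4 and statement 3 of Lemma~\ref{lemma:almost-orthonormal},
\begin{align}
  \pnorm{F}{\widehat{A}^{\sim2}\widehat{A}^{\sim-2} - \Pi_V} \leq \norm{V}^2\norm{D^2}\norm{D^{-2}}\pnorm{F}{V^{\dag}V - I} + \pnorm{F}{VV^{\dag} - \Pi_V} = O(\epsilon^2).
\end{align}
Since $\widehat{A}^{\sim2}(\widehat{A}^{\sim2})^{-1} = \Pi_V$ and both $\widehat{A}^{\sim-2}$ and $(\widehat{A}^{\sim2})^{-1}$ have column space contained in that of $V$ --- a subspace on which $\widehat{A}^{\sim2}$ acts injectively with least singular value $\sigma_{\min}(\widehat{A}^{\sim2})$ --- left-multiplication of $\widehat{A}^{\sim-2} - (\widehat{A}^{\sim2})^{-1}$ by $\widehat{A}^{\sim2}$ shrinks the Frobenius norm by at most the factor $\sigma_{\min}(\widehat{A}^{\sim2})$, and produces exactly $\widehat{A}^{\sim2}\widehat{A}^{\sim-2} - \Pi_V$, so
\begin{align}
  \pnorm{F}{(\widehat{A}^{\sim2})^{-1} - \widehat{A}^{\sim-2}} \leq \frac{\pnorm{F}{\widehat{A}^{\sim2}\widehat{A}^{\sim-2} - \Pi_V}}{\sigma_{\min}(\widehat{A}^{\sim2})} = O(\epsilon^2),
\end{align}
which is dominated by the first term. (An equivalent route writes $V = U\widehat{R}$ with $U$ an isometry and $\widehat{R}$ the near-identity triangular factor from the proof of Lemma~\ref{lemma:almost-orthonormal}, and expands $\widehat{R}^{-1} = I - (\widehat{R}-I) + \cdots$.)

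The main obstacle --- and really the only non-mechanical point --- is this second term: $\widehat{A}^{\sim-2}$ is a matrix we built by hand that only \emph{approximately} inverts $\widehat{A}^{\sim2}$, so the clean statement ``closeness of matrices implies closeness of pseudo-inverses'' (Lemma~\ref{lemma:x-1-y-1}) cannot be applied to it directly. Controlling it cleanly needs both the quantitative near-isometry of $V$ (Lemma~\ref{lemma:almost-orthonormal}) and a uniform lower bound on $\sigma_{\min}(\widehat{A}^{\sim2})$, so that inversion restricted to the column space of $V$ is well-conditioned; everything else is the triangle inequality and submultiplicativity of $\pnorm{F}{\cdot}$.
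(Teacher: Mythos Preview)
Your proposal is correct and follows essentially the same route as the paper: the same triangle-inequality split through $(\widehat{A}^{\sim2})^{-1}$, the first term handled by Lemma~\ref{lemma:x-1-y-1} together with Lemma~\ref{lem:appx_A}, and the second term by showing $\widehat{A}^{\sim\pm 2}\widehat{A}^{\sim\mp 2}$ is $O(\epsilon^2)$-close to $\Pi_V$ via statements~3 and~4 of Lemma~\ref{lemma:almost-orthonormal}, then converting this into a bound on $\pnorm{F}{(\widehat{A}^{\sim2})^{-1} - \widehat{A}^{\sim-2}}$ by dividing by $\sigma_{\min}(\widehat{A}^{\sim2})$. The only cosmetic differences are that the paper multiplies on the right (working with $\widehat{A}^{\sim-2}\widehat{A}^{\sim2}$ and bounding by $\norm{(\widehat{A}^{\sim2})^{-1}}$) whereas you multiply on the left, and you are more explicit than the paper in justifying the lower bound on $\sigma_{\min}(\widehat{A}^{\sim2})$ via the spectrum of $DV^{\dag}VD$.
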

\begin{proof}
  We first bound the distance between $(\widehat{A}^{\sim2})^{-1}$ and $(A^{\dag}A)^{-1}$. Observe that $\sigma_{\min}(A^{\dag}A) = \norm{A}^2/\kappa^2$, and with probability at least $9/10$ $\sigma_{\min}(A^{\sim2}) = \sigma_k(A^{\sim2}) \geq \norm{A}^2/(2\kappa^2)$, as shown in Eq.~\eqref{eq:sigmamin-w}. Let $\sigma_{\min} = \max\{\sigma_{\min}(\widehat{A}^{\sim2}), \sigma_{\min}(A^{\dag}A)\} \geq \norm{A}^2/\kappa^2$. By Lemma~\ref{lemma:x-1-y-1}, we have
  \begin{align}
    \pnorm{F}{(\widehat{A}^{\sim2})^{-1}-(A^{\dag}A)^{-1}} &\leq \frac{3\|\widehat{A}^{\sim2}-A^{\dag}A\|_F}{\sigma^2_{\min}} \\
                                                           &\leq \frac{6\epsilon\pnorm{F}{A}^3}{k^{3/2}\kappa^4\sigma_{\min}^2}\\
                                                           &\leq \frac{6\epsilon\pnorm{F}{A}^3}{k^{3/2}\norm{A}^4} \\
    \label{eq:N_v2_1}
    &\leq 6\epsilon,
  \end{align}
  where the last inequality follows from the fact that $\pnorm{F}{A} \leq \sqrt{k}\norm{A}$.

  Next, we bound the distance between $(\widehat{A}^{\sim2})^{-1}$ and $\widehat{A}^{\sim-2}$. As in the proof of Lemma~\ref{lem:appx_A}, define $V \in \bbc^{n \times k}$ as $V(\cdot, i) = \frac{S^{\dag}}{\hat{\sigma}_i}\hat{u}_i$ for $i \in \{1, \ldots, k\}$. Let $\Pi_V$ be the projector onto the space spanned by the column vectors of $V$. By Lemma~\ref{lemma:almost-orthonormal}, there exists an isometry $U$ whose columns vectors span the column space of $V$ satisfying $\pnorm{F}{U-V} \leq \epsilon^2/(\sqrt{2}k^{7/2}\kappa^8) + O(\epsilon^4)$. We have
  \begin{align}
    \pnorm{F}{(\widehat{A}^{\sim2})^{-1} - \widehat{A}^{\sim -2}} &= \pnorm{F}{\left((\widehat{A}^{\sim2})^{-1} - \widehat{A}^{\sim -2}\right)\widehat{A}^{\sim2}(\widehat{A}^{\sim2})^{-1}}\nonumber\\
                                                                &\leq \pnorm{F}{\Pi_V - \widehat{A}^{\sim-2}(\widehat{A}^{\sim2})}\norm{(\widehat{A}^{\sim2})^{-1}} \\
                                                                &\leq \pnorm{F}{ VD^{-2}V^{\dag}VD^{2}V^{\dag}-\Pi_V}\norm{(\widehat{A}^{\sim2})^{-1}} \\
                                                                &\leq \left(\pnorm{F}{VD^{-2}V^{\dag}VD^{2}V^{\dag}-VD^{-2}ID^{2}V^{\dag}}+\pnorm{F}{VV^{\dag}-\Pi_V}\right)\norm{(\widehat{A}^{\sim2})^{-1}} \\
                                                                &\leq \norm{(\widehat{A}^{\sim2})^{-1}}\left(\norm{VD^2}\pnorm{F}{V^{\dag}V-I}\norm{D^{-2}V}+\frac{\epsilon^2}{k^{7/2}\kappa^4}\right) + O(\epsilon^4)\\
                                                                &\leq \norm{(\widehat{A}^{\sim2})^{-1}}\left(\norm{D}^2\norm{D^{-1}}^2\frac{\epsilon^2}{\sqrt{2}k^{7/2}\kappa^4}+\frac{\epsilon^2}{k^{7/2}\kappa^4}\right) + O(\epsilon^4)\\
                                                                &\leq \norm{(\widehat{A}^{\sim2})^{-1}}\left(\frac{(4\kappa^2+1)\epsilon^2}{k^{7/2}\kappa^4}\right) + O(\epsilon^4)\\
                                                                &\leq \frac{(8\kappa^4+2\kappa^2)\epsilon^2}{k^{7/2}\kappa^4\norm{A}^2} + O(\epsilon^4).
  \end{align} 
  Finally, we have
  \begin{align}
    \pnorm{F}{\widehat{A}^{\sim-2}-(A^{\dag}A)^{-1}} &\leq \pnorm{F}{\widehat{A}^{\sim-2} - (\widehat{A}^{\sim2})^{-1}}+\pnorm{F}{(\widehat{A}^{\sim2})^{-1}-(A^{\dag}A)^{-1}}\\
                                                     &\leq 6\epsilon + \frac{(8\kappa^4+2\kappa^2)\epsilon^2}{k^{7/2}\kappa^4\norm{A}^2} + O(\epsilon^4) \\ 
    \label{eq:n31}
                                                     &\in O(\epsilon).
   \end{align}
\end{proof}

\section{Main algorithm and proofs of main theorems}
\label{sec:main-alg}
In this section, we present the main algorithms and prove the main theorems.

Before showing the main algorithms, we briefly give the idea. Instead of considering $A^{-1} b$ directly, our algorithms aim to implement $(A^{\dag}A)^{-1}A^{\dag}b$. The following claim shows that $(A^{\dag}A)^{-1}A^{\dag} b = A^{-1} b$ for $A\in \mathbb{C}^{m\times n}$ and $b \in \bbc^m$.  
\begin{claim}~\label{claim:N_1}
Let $M\in \mathbb{C}^{m\times n}$ and $v\in \mathbb{C}^m$. Then, $M^{-1}v = (M^{\dag}M)^{-1}M^{\dag}v$. 
\end{claim}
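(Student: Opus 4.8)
The plan is to verify the identity directly from the singular value decomposition of $M$, matching the conventions fixed earlier in the paper for $M^{\dag}$ and for the Moore-Penrose pseudo-inverse. Write $M = \sum_{i=1}^{k}\sigma_i u_i w_i^{\dag}$ with $k=\rank(M)$, where $\{u_i\}_{i=1}^k\subset\bbc^m$ and $\{w_i\}_{i=1}^k\subset\bbc^n$ are orthonormal systems and $\sigma_1\ge\cdots\ge\sigma_k>0$. By the definition of the pseudo-inverse used in the paper, $M^{-1}=\sum_{i=1}^k\sigma_i^{-1}w_i u_i^{\dag}$, while $M^{\dag}=\sum_{i=1}^k\sigma_i w_i u_i^{\dag}$, so $M^{\dag}M=\sum_{i=1}^k\sigma_i^2 w_i w_i^{\dag}$ and therefore $(M^{\dag}M)^{-1}=\sum_{i=1}^k\sigma_i^{-2}w_i w_i^{\dag}$.

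Next I would multiply these two expansions and collapse the double sum using orthonormality of the right singular vectors, $w_i^{\dag}w_j=\delta_{ij}$:
\begin{align}
  (M^{\dag}M)^{-1}M^{\dag} = \left(\sum_{i=1}^k \sigma_i^{-2} w_i w_i^{\dag}\right)\left(\sum_{j=1}^k \sigma_j w_j u_j^{\dag}\right) = \sum_{i=1}^k \sigma_i^{-1} w_i u_i^{\dag} = M^{-1}.
\end{align}
Applying both sides to $v$ gives $(M^{\dag}M)^{-1}M^{\dag}v = M^{-1}v$, which is the claim.

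The only point warranting care — and it is something I would double-check rather than a genuine obstacle — is that $(M^{\dag}M)^{-1}$ is itself a pseudo-inverse, so $(M^{\dag}M)^{-1}(M^{\dag}M)$ is not the identity but the orthogonal projector $\Pi=\sum_{i=1}^k w_i w_i^{\dag}$ onto the column space of $M^{\dag}$; the computation above is consistent precisely because $\Pi M^{\dag}=M^{\dag}$ already holds. Alternatively, one could avoid the SVD and instead check that $X:=(M^{\dag}M)^{-1}M^{\dag}$ satisfies the four Penrose conditions $MXM=M$, $XMX=X$, $(MX)^{\dag}=MX$, and $(XM)^{\dag}=XM$, which uniquely identify $X=M^{-1}$; but the SVD argument is the shortest and is the one I would write out.
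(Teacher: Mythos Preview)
Your argument is correct and is essentially the same as the paper's: both expand $M$ via its SVD, use orthonormality of the singular vectors to simplify $(M^{\dag}M)^{-1}M^{\dag}$ to $M^{-1}$, and then apply the result to $v$. The paper writes it compactly in matrix form ($M=UDV^{\dag}$) rather than as sums, but the computation is identical.
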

\begin{proof}
Let $UDV^{\dag} = M$ be the singular decomposition of $M$. Then, 
\[
    (M^{\dag}M)^{-1}M^{\dag}v = (VDU^\dag UDV^{\dag})^{-1}VDU^\dag v = VD^{-1}U^\dag v= M^{-1} v.  
\]
\end{proof}

Given matrix $A$ and vector $b$, we work on the matrix $\widehat{A}^{\sim-2} \in \mathbb{C}^{n\times n}$ obtained by Algorithm~\ref{alg:subsampling} as $\|(\widehat{A}^{\sim-2} - (A^\dag A)^{-1}\|$ is bounded according to Lemma~\ref{lem:nx}. Then we use the sampling techniques in Section~\ref{sec:samp_tech} to accomplish the task of sampling and querying $\widehat{A}^{\sim-2}A^\dag b$. 

\begin{algorithm}
  \SetKwInOut{Input}{input}\SetKwInOut{Output}{output}
  \Input{$A \in \bbc^{m \times n}$ with access as in Assumption~\ref{asmp:sample-m}, $b \in \bbc^m$ with the sampling access as in Assumption~\ref{asmp:sample-v}, and an index $j \in \{1, \ldots, n\}$}
  Take $A$ as the input of Algorithm~\ref{alg:subsampling} and obtain a description of $\widehat{A}^{\sim2} = VD^2V^{\dag}$, where $V(\cdot, i) = \frac{S^T\hat{u}_i}{\hat{\sigma}_i}$ for $i \in \{1, \ldots, k\}$ and $D = \Diag(\hat{\sigma}_1, \ldots, \hat{\sigma}_k)$\;
  Use Lemma~\ref{lem:samp_4} to estimate $V^{\dag}(i, \cdot)A^{\dag}b$ for $i \in \{1, \ldots, k\}$. The query and sampling access to $V^{\dag}$ is established in Lemma~\ref{lem:samp_2}\;
  Construct the vector $w \in \bbc^{k}$ as $w(i) = V^{\dag}(i, \cdot)A^{\dag}b$\;
  Compute $w' = D^{-2}w$\;
  Output the inner product $V(j, \cdot)w'$\;
  \caption{Solving Linear system for querying an entry of $A^{-1}b$ \label{alg:query}}
\end{algorithm}

\mainquery*

\begin{proof}

The algorithm we use is shown in Algorithm~\ref{alg:query}. We first prove the correctness as follows. In step 1, one can compute $V(i,j)$ and sample each element in $V(\cdot,j)$ by using the data structure (as shown in Theorem~\ref{thm:data-m}) and Lemma~\ref{lem:samp_2}. Step 2 yields values with additive $\tilde{\epsilon}'$ to $V^\dag(i,\cdot) A^{\dag} b$ by Lemma~\ref{lem:samp_4}. Note that $w$ is not exactly $V^{\dag}A^{\dag}b$, but an estimate. Hence, 
\[
   \|w-V^\dag A^{\dag}b\|\leq \tilde{\epsilon}' \sqrt{k}.
\]
Steps 4 and 5 can be done without error since evaluating the inner product only takes constant time.  Therefore, the sampling error is bounded by
\begin{align}
\label{eq:vw'1}
    \|Vw' - V D^{-2} V^\dag Ab\| \leq 
    \|V\|\|D^{-2}\|\|w-V^\dag Ab\|
    \leq O\left( \frac{\tilde{\epsilon}'\sqrt{k}\kappa^2}{\|A\|^2}\right). 
\end{align}
Then, by Lemma~\ref{lem:nx} (with error parameter $\tilde{\epsilon}$ in Algorithm~\ref{alg:subsampling}), the approximation error is bounded by
\begin{align}
    \|(\widehat{A}^{\sim-2}A^\dag-(A^\dag A)^{-1}A^\dag)b \|&\leq 
    \|(\widehat{A}^{\sim-2}-(A^\dag A)^{-1}\|\|A\|\|b\|\\
    \label{eq:vw'2}
    &\leq O(\tilde{\epsilon} \|A\|\|b\|). 
\end{align}
The output of the algorithm has an error $O(\tilde{\epsilon}' \sqrt{k}\kappa^2/\|A\|^2 +\tilde{\epsilon}\|A\|\|b\|)$, where $\tilde{\epsilon}'$ is the sampling error from Lemma~\ref{lem:samp_4} and $\tilde{\epsilon}$ is the approximation error from Lemma~\ref{lem:nx}. To bound the total error by a single $\epsilon$, we rescale the error parameter in step 1 of Algorithm~\ref{alg:subsampling} by $\tilde{\epsilon} = \epsilon/(\norm{A}\norm{b})$, and rescale the error parameter in Lemma~\ref{lem:samp_4} by $\tilde{\epsilon}' = \epsilon\norm{A}^2/(\sqrt{k}\kappa^2)$. Hence, in step 1 of Algorithm~\ref{alg:subsampling}, $p$ becomes $p = 10^7\cdot\frac{k^{11}\kappa^{20}\norm{b}^4}{\epsilon^4}$. The time complexity of Algorithm~\ref{alg:query} is dominated by $O(p^3) = O(\frac{k^{33}\kappa^{60}\norm{b}^{12}}{\epsilon^{12}})$ for applying singular value decomposition. 

The query complexity can be calculated as follows. According to Lemma~\ref{lem:samp_2}, given $i\in \{1, \ldots, n\}$, one can output a sample from $V(\cdot,i)$ with probability $9/10$ in $O(p^2C(S^{\dag},\hat{u}_i/\hat{\sigma}_i))$ queries and output any entry of $V(\cdot,i)$ in $O(p)$ queries. Here,
\[
   C(S^{\dag},\hat{u}_i/\hat{\sigma}_i)) = \frac{\sum_{j=1}^p \|\hat{u}_i(j)S(j,\cdot)/\hat{\sigma}_i\|^2}{\|S^{\dag}\hat{u}_i/\hat{\sigma}_i\|^2}
   \leq\frac{(3/2)\|\hat{u}_i/\hat{\sigma}_i\|^2\|A\|_F^2}{\left(1-\frac{\epsilon^2\|A\|^2_F}{\sqrt{2}k^{7/2}\kappa^8}-O(\epsilon^2)\right)^2}
   \leq O(\kappa^2k). 
\]
The first inequality is by Lemma~\ref{lemma:almost-orthonormal}, where $\|S^{\dag}\hat{u}_i/\hat{\sigma}_i\| \geq 1- \epsilon^2\|A\|^2_F/(k^{7/2}\kappa^8)-O(\epsilon^2)$.  The last inequality is true since the minimum nonzero singular value of $W$ is at least $\|A\|^2/2\kappa$, which is shown by Eq.~\eqref{eq:sigmamin-w}. Then, step 3 requires $O(\frac{k^2\norm{A}_F\norm{V(\cdot,j)}\norm{b}\kappa^4}{\epsilon^2\norm{A}^4}\log(\frac{1}{\delta}))$ query complexity to output the vector $w$. (Note that we have rescaled the error parameter for Lemma~\ref{lem:samp_4}.) Given $w$, doing the calculation in step 4 and step 5 requires $O(k^2)$ steps. Hence the algorithm uses 
\begin{align}
\label{eq:query-complexity}
    O\left(p^2\left(\kappa^2k\right)\frac{k^2\pnorm{F}{A}\norm{V(\cdot,j)}\norm{b}\kappa^4}{\epsilon ^2\norm{A}^4}\polylog(m,n)\log\frac{1}{\delta}\right)
    =O\left(\frac{k^{26}\kappa^{46}\|b\|^9}{\epsilon^{10}\|A\|^{3}}\polylog(m,n)\log\frac{1}{\delta}\right)
\end{align}
queries and runs in time 
\begin{align}
\label{eq:time-complexity}
    O\left(\max\left\{\frac{k^{33}\kappa^{60}\norm{b}^{12}}{\epsilon^{12}},\frac{k^{26}\kappa^{46}\|b\|^9}{\epsilon^{10}\|A\|^{3}}\right\}\polylog(m,n)\log\frac{1}{\delta}\right)
\end{align}
to succeed with probability $1-\delta$.  
\end{proof}

\begin{algorithm}
  \SetKwInOut{Input}{input}\SetKwInOut{Output}{output}
  \Input{$A \in \bbc^{m \times n}$ with the sampling access as in Assumption~\ref{asmp:sample-m} and $b \in \bbc^m$ with the sampling access as in Assumption~\ref{asmp:sample-v}}
  Take $A$ as the input of Algorithm~\ref{alg:subsampling} and obtain a description of $\widehat{A}^{\sim2} = VD^2V^{\dag}$, where $V(\cdot, i) = \frac{S^T\hat{u}_i}{\hat{\sigma}_i}$ for $i \in \{1, \ldots, k\}$ and $D = \Diag(\hat{\sigma}_1, \ldots, \hat{\sigma}_k)$\;
  Use Lemma~\ref{lem:samp_4} to estimate $V^{\dag}(i, \cdot)A^{\dag}b$ for $i \in \{1, \ldots, k\}$. The query and sampling access to $V^{\dag}$ is established in Lemma~\ref{lem:samp_2}\;
  Construct the vector $w \in \bbc^{k}$ as $w(i) = V^{\dag}(i,\cdot)A^{\dag}b$\;
  Compute $w' = D^{-2}w$\;
  Use Lemma~\ref{lem:samp_3} to sample $Vw'$. The query and sampling access to $V$ is established in Lemma~\ref{lem:samp_2}\;
  \caption{Solving Linear system for sampling from $A^{-1}b$ \label{alg:samp}}
\end{algorithm}

\mainsample*

\begin{proof}
The proof mostly follows the proof of Theorem~\ref{thm:main1}. The main difference is that the last step of Algorithm~\ref{alg:samp} requires to output a sample from $\mathcal{D}_{V w'}$. This is done by using Lemma~\ref{lem:samp_3}. Let $D_{w''}$ be the distribution that Algorithm~\ref{alg:samp} actually samples from in step 5. Algorithm~\ref{alg:samp} guarantees to give a sample from $\mathcal{D}_{w''}$ which is $O(\frac{\tilde{\epsilon}^2\|A\|^2_F}{k^{7/2}\kappa^{8}})$-close to the distribution of $\mathcal{D}_{Vw'}$ in terms of total variation distance. Note that we use $\tilde{\epsilon}$ for the error parameter in step 1 of Algorithm~\ref{alg:subsampling} and use $\tilde{\epsilon}'$ for the error parameter for Lemma~\ref{lem:samp_4}. Together with the fact that $\|\widehat{A}^{\sim-2}-(A^\dag A)^{-1}\|\leq O(\tilde{\epsilon})$, we have
\begin{align*}
    \|\mathcal{D}_{w''},\mathcal{D}_{(A^\dag A)^{-1}A^\dag b}\|_{TV} &\leq \|\mathcal{D}_{w''},\mathcal{D}_{Vw'}\|_{TV}+\|\mathcal{D}_{Vw'},\mathcal{D}_{(A^\dag A)^{-1}A^\dag b}\|_{TV} \\
    &\leq O\left(\frac{\tilde{\epsilon}^2\|A\|^2_F}{k^{7/2}\kappa^{8}}\right) + O\left(\tilde{\epsilon}' \sqrt{k}\kappa^2/\norm{A}^2 + \tilde{\epsilon}\|A\|\|b\|\right) \\
    &\leq O(\tilde{\epsilon}) + O\left(\tilde{\epsilon}' \sqrt{k}\kappa^2/\norm{A}^2 + \tilde{\epsilon}\|A\|\|b\|\right), 
\end{align*}
where the second big-$Q$ follows from Eqns.~\eqref{eq:vw'1} and \eqref{eq:vw'2}. Again, we rescale the error parameters as $\tilde{\epsilon} = \epsilon/(\norm{A}\norm{b})$ and $\tilde{\epsilon}' = \epsilon\norm{A}^2/(\sqrt{k}\kappa^2)$ so that the total error is $\epsilon$.

The analysis of time and query complexity is the same as the proof of Theorem~\ref{thm:main1} except that the last step for sampling from $\mathcal{D}_{Vw'}$ requires $O\left(k^2\left(1+O\left(\frac{\tilde{\epsilon}^2\|A\|^2_F}{k^{7/2}\kappa^{8}}\right)\right)\right)$ query and time complexity, which is dominated by other terms in Eqns.~\eqref{eq:query-complexity} and \eqref{eq:time-complexity}.

\end{proof}

\subsection{When \texorpdfstring{$b$}{b} is not exactly in the left-singular vector space of \texorpdfstring{$A$}{A}}
\label{subsec:b-not-in-a}

We have mentioned in the beginning that when $b$ has little or zero overlaps with the left singular vector space of $A$, the outputs of our algorithms may be dominated by the additive error from sampling and approximation. More specifically, let $S_\ell$ be the left singular vector space of $A$. In the case where $b = \sqrt{c} b_A + \sqrt{1-c} \bar{b}_A$ for $b_A\in S_\ell$, $\bar{b}_A\notin S_\ell$, and $c \ll 1$, it is very likely that $\epsilon \gg b_{A}(i)$ for most $i$. Therefore, the sampling subroutine in step 5 of Algorithm~\ref{alg:samp} is not reasonable. 

However, we can estimate the projection of $b$ on the left-singular vector space of $A$ by evaluating $b^{\dag} A (VD^{-2}V^\dag A^\dag)\approx b^{\dag} A^{-1}b$ according to Lemma~\ref{lem:samp_4}. With this routine, one can set a threshold such that only $b$'s which projections on the left-singular vector space of $A$ are greater than the threshold will be considered to be sampled.

\subsection{When the sampling access to \texorpdfstring{$b$}{b} is not given}
\label{subsec:no-sampling-b}
In this subsection, we sketch an algorithm for solving the problem when the sampling assumption of $b$ is not given and $A$ is positive semidefinite. Recall that in Algorithm~\ref{alg:subsampling}, we have obtained a $\hat{\sigma}_1, \ldots, \hat{\sigma}_k \in \bbr$ and $\hat{u}_1, \ldots, \hat{u}_k \in \bbc^{p}$. Again, define the diagonal matrix $D$ as $D = \Diag(\hat{\sigma}_1, \ldots, \hat{\sigma}_k)$ and matrix $V$ as $V(\cdot, j) = \frac{S^{\dag}}{\hat{\sigma}_j}\hat{u}_j$, where $S$ is the normalized submatrix sampled from step~\ref{step:sampling:s} in Algorithm~\ref{alg:subsampling}. In Lemma~\ref{lem:appx_A}, we have shown that $VD^2V^{\dag} \approx A^{\dag}A$. If $A$ is positive semidefinite, by using Lemma~\ref{lemma:x2-y2} again (and by arguing $VD^2V^{\dag} \approx (VDV^{\dag})^2$ because $V$ is close to an isometry as shown in Lemma~\ref{lemma:almost-orthonormal}), one can show that $VDV^{\dag} \approx A$. Then, invoking Lemma~\ref{lemma:x-1-y-1} (and by arguing $VD^{-1}V^{\dag} \approx (VDV^{\dag})^{-1}$), it can be shown that $VD^{-1}V^{\dag} \approx A^{-1}$. Now, we use Lemma~\ref{lem:samp_1} $k$ times to get an estimate of the $V^{\dag}b$ (where no sampling assumption for $b$ is required). The vector $D^{-1}V^{\dag}b$ can be computed easily. Finally, we use Lemma~\ref{lem:samp_3} to sample from and query to $V(D^{-1}V^{\dag}b)$.

\section{Acknowledgements}
We thank Scott Aaronson for the valuable feedback on a draft of this paper. We also thank Tongyang Li for pointing out reference~\cite{AKP18}. 

\bibliographystyle{plain}
\bibliography{ref.bib}
\appendix
\end{document}